\tikzstyle{block}=[draw opacity=0.7,line width=1.4cm]
\DeclareMathAlphabet{\mathpzc}{OT1}{pzc}{m}{it}
\definecolor{CranJ}{cmyk}{0,0.69,0.54,0.04} 
\definecolor{PinkJ}{cmyk}{0,0.71,0.43,0.12} 
\definecolor{Cran}{cmyk}{0,0.73,0.41,0.29} 
\definecolor{VRed}{cmyk}{0,0.75,0.25,0.2} 
\definecolor{ORed}{cmyk}{0,0.75,0.75,0} 
\definecolor{CBlue}{cmyk}{1,0.25,0,0} 
\title{ \LARGE \bf  A Study  on  Accelerating  Average Consensus Algorithms\\ Using Delayed Feedback}
\author{Hossein Moradian, \emph{Student Member, IEEE} and Solmaz S. Kia, \emph{Senior Member, IEEE}
  \thanks{The authors are with the Department of Mechanical and Aerospace Engineering, University of California Irvine, Irvine, CA 92697,  
    {\tt\small \{hmoradia,solmaz\}@uci.edu}}%
}
\newcommand{\ee}{\operatorname{e}}
\newcommand{\VV}{\mathcal{V}}
\newcommand{\EE}{\mathcal{E}}
\newcommand{\GG}{\mathcal{G}}
\newcommand{\real}{{\mathbb{R}}}
\newcommand{\reals}{{\mathbb{R}}}
\newcommand{\integer}{{\mathbb{Z}}}
\newcommand{\complex}{{\mathbb{C}}} 
\newcommand{\realpositive}{{\mathbb{R}}_{>0}}
\newcommand{\realnonnegative}{{\mathbb{R}}_{\ge 0}}
\newcommand{\eps}{\epsilon}
\newcommand{\re}[1]{\operatorname{Re}(#1)}
\newcommand{\im}[1]{\operatorname{Im}(#1)}
\newcommand{\until}[1]{\in\{1,\dots,#1\}}
\newcommand{\argmax}{\operatorname{argmax}}
\newcommand{\vect}[1]{\boldsymbol{\mathbf{#1}}}
\newcommand{\vectsf}[1]{\vect{\mathsf{#1}}}
\newcommand{\Bvect}[1]{\bar{\boldsymbol{\mathbf{#1}}}}
\newcommand{\dvect}[1]{\dot{\vect{#1}}}
\newcommand{\dvectsf}[1]{\dot{\vectsf{#1}}}
\newcommand{\Diag}[1]{\operatorname{Diag}(#1)}
 \newcommand{\boxend}{\hfill \ensuremath{\Box}}
\newcommand{\margin}[1]{\marginpar{\color{red}\tiny\ttfamily#1}}
\newcommand{\solmaz}[1]{{\color{red}#1}}
\newcommand{\hossein}[1]{{\color{blue}#1}}
\newtheorem{thm}{Theorem}[section]
\newtheorem{rem}{Remark}[section]
\newtheorem{lem}{Lemma}[section]
\newcommand{\oprocendsymbol}{\hbox{$\bullet$}}
\newcommand{\oprocend}{\relax\ifmmode\else\unskip\hfill\fi\oprocendsymbol}
\begin{document}
\maketitle
\pagenumbering{roman}
\begin{abstract}
In this paper, we study accelerating a  Laplacian-based dynamic average consensus algorithm by splitting  the conventional delay-free disagreement feedback into weighted summation of a current and an outdated term.
We determine for what weighted sum there exists a range of time delay that results in the higher rate of convergence for the algorithm. For such weights, using the Lambert W function, we obtain the rate increasing range of the time delay, the maximum reachable rate and comment on the value of the corresponding maximizer delay. We also study the effect of use of outdated feedback on the control effort of the agents and show that only for some specific affine combination of the immediate and outdated feedback the control effort of the agents does not go beyond that of the delay-free algorithm. Additionally, we demonstrate that using outdated feedback does not increase the steady state tracking error of the average consensus algorithm.
Lastly, we determine the optimum combination of the current and the outdated feedback weights to achieve maximum increase in the rate of convergence without increasing the control effort of the agents. We demonstrate our results through a numerical example.

\end{abstract}


\vspace{-0.1in}
\section{Introduction}
\vspace{-0.05in}
The average consensus problem for a group of networked agents each endowed with a  reference input 
signal (dynamic or static) is defined as designing a distributed interaction policy for each agent 
such that a local agreement state    converges asymptotically to the average of the reference signals across the network.
For this problem, in continuous time domain, when the reference signals of all the agents are static, the well-known distributed solution is the  Laplacian consensus algorithm~\cite{ROS-RMM:04,WR-RWB:05,LX-SB:04,ROS-JAF-RMM:07}. In the Laplacian consensus, each agent initializes its first order integrator dynamics with its local reference value and uses the weighted sum of the  difference between its local state and those of its neighbors (disagreement feedback) to drive its local dynamics to the average of the reference signals across the network. When the reference signals are dynamics, agents use a combination of the Laplacian input and their local reference signal and/or its derivative to drive their local integrator dynamics; see~\cite{SSK-BVS-JC-RAF-KML-SM} for examples of dynamic average consensus algorithms. Average consensus algorithms are of interest in various multi-agent applications such as sensor fusion~\cite{ROS-JSS:05,ROS:07,ATK-JAF-AKR:13,WR-UMA:17}, robot coordination~\cite{PY-RAF-KML:08,YC-SSK:19b}, formation control~\cite{JAF-RMM:04}, distributed optimal resource allocation~\cite{AC-JC:14-auto,SSK:17}, distributed estimation~\cite{SM:07} and distributed tracking~\cite{PY-RAF-KML:07}. For these cooperative tasks, it is highly desired that the consensus among the agents is obtained fast, i.e., the consensus  algorithm converges fast. For a connected network  with undirected communication, it is well understood that~the  convergence rate of the average consensus algorithms is associated with the connectivity of the graph~\cite{MF:73}, specified by the smallest non-zero eigenvalue of the Laplacian matrix~\cite{ROS-RMM:04,SSK-BVS-JC-RAF-KML-SM}.

Given this connection, 
various efforts such as optimal adjacency weight selection for a given topology by maximizing the smallest non-zero eigenvalue of the Laplacian matrix~\cite{LX-SB:04,SB-AG-BP-DS:06} or rewiring the graph to create topologies such as small-world network~\cite{SK-JMF:06,PH-JSB-VG:08} with high connectivity have been proposed in the literature. 
In this paper, we study use of outdated disagreement feedback to increase the convergence rate of a dynamic average consensus algorithm. Our method can be applied in conjunction with the aforementioned topology designs to maximize the acceleration effect. 

In performance analysis of dynamical systems intuition inadvertently ties time delay to sluggishness and adverse effects on system response. However, some work such as~\cite{BG-SM-MHS:98,YG-MS-CS-NM:16,MC-DAS-EMY:06,ZM-YC-WR:10,AGU:17,YC-WR:10,HM-SSK:18b, WQ-RS:13} point to the positive effect of time delay on increasing stability margin and rate of convergence of time-delayed systems. Specifically, the positive effect of time-delayed feedback in accelerating the  convergence of the static average consensus Laplacian algorithm is reported in~\cite{HM-SSK:18b,WQ-RS:13,YC-WR:10}. 
The study in~\cite{HM-SSK:18b} and~\cite{WQ-RS:13} consider delaying  the immediate Laplacian disagreement feedback,  and show that when the network topology is connected there always exists a range of delay $(0,\tilde{\tau})$ such that the rate of convergence of the modified algorithm is faster. The technical results also include specifying $\tilde{\tau}$ and also showing that the maximum attainable convergence rate due to employing delayed feedback is the Euler number times the rate without delay.~\cite{HM-SSK:18b} also specifies the delay for which maximum convergence rate is attained. However, the effect of use of outdated feedback on the control effort of the agents is left unexplored in~\cite{HM-SSK:18b} and~\cite{WQ-RS:13}. This study is of importance because one can always argue that the convergence rate of the original Laplacian algorithm can be increased by multiplying the Laplacian input with a gain greater than one. But, this choice leads to  increase in the control effort of the agents. On the other hand,~\cite{YC-WR:10} studies a modified Laplacian algorithm where the immediate Laplacian disagreement feedback input is broken in half and one half is replaced by  outdated delayed feedback. For this modified algorithm,~\cite{YC-WR:10} shows that it is possible to increase the convergence rate for some values of delay. They also show that this increase in rate is without increasing the control effort of the agents. However,~\cite{YC-WR:10} falls short of specifying the exact range of delay for which the convergence rate can be increased by employing the outdated feedback and also quantifying the maximum rate and its corresponding maximizer delay.

In this paper, we study the use of outdated disagreement feedback to increase the convergence rate of the dynamic average consensus algorithm of~\cite{dps-ros-rmm:05b}. This algorithm, when the reference signal of the agents are all static, simplifies to the static Laplacian average consensus algorithm~\cite{ROS-JAF-RMM:07}. In our study, we split the disagreement Laplacian feedback into two components of immediate and outdated feedback. However instead of equal contribution, we consider the affine combination of the current and outdated feedback to investigate the effect of the relative size of the outdated and immediate feedback terms on the induced acceleration. Our comprehensive study  includes~\cite{YC-WR:10},~\cite{WQ-RS:13} and~\cite{HM-SSK:18b} as special cases. We note here that the analysis methods used in~\cite{YC-WR:10},~\cite{WQ-RS:13} and~\cite{HM-SSK:18b} do not generalize to study the case of affine combination of the immediate and outdated feedback. This is due to the technical challenges involved with study of the variation of the infinite number of the roots of the characteristic equation of the linear time delayed systems with delay, which often are resolved via methods that conform  closely to the specific algebraic structure of the system  under study. We recall here that the exact value of the worst convergence rate of a linear time-invariant system,  with or without delay, is determined by the magnitude of the real part of the right most root of its characteristic equation~\cite{TH-ZL-YS:03,SD-JN-AGU:11}.

We start our study by characterizing the admissible range of delay for which the average consensus tracking is maintained. Then, we show that for the delays in the admissible range, the ultimate tracking error of our modified average consensus algorithm of interest is not affected by use of outdated feedback regardless of the affine combination's split factor. However, we show that the control effort of the agents does not increase only for a specific range of the split factor of the affine combination of the outdated and immediate feedback. Our results also specify (a) for what values of the system parameters the rate of convergence in the presence of delay can increase, (b) the exact values of delay for which the rate of convergence increases, and (c) the optimum value of $\tau$ corresponding to the maximum rate of convergence in the presence of delay. In light of all the aforementioned study, we summarize our results in the remark that discuses the trade off between the performance (maximum convergence rate) and the robustness of the algorithm to the delay as well as the level of control effort. Our study relies on use of the Lambert W function~\cite{HS-TM:06,RMC-GHG-DEGH-DJJ-DEK:96} to obtain the exact value of the characteristic roots of the internal dynamics of our dynamic consensus algorithm. Via careful study of variation of the right most root in the complex plan with respect to delay we then proceed to conduct our study to establish our~results.

\emph{Organization}:
Notations and preliminaries including a brief review of the relevant properties of the Lambert W function and the graph theoretic definitions are given in Section~\ref{sec::prelim}.  Problem definitions and the objective statements are given in Section~\ref{sec::Prob_formu}, while the main results are given in~Section~\ref{sec::main}.
Numerical simulations to illustrate our results are given in Section~\ref{sec::Num_ex}.   Section~\ref{sec::conclu} summarizes our concluding remarks. Finally, the appendices contain the auxiliary lemmas that we use to develop our main~results.

\section{Notations and Preliminaries}\label{sec::prelim}
In this section, we review our notations, definitions and auxiliary results that we use in our developments.

We let $\reals$, $\realpositive$, $\realnonnegative$, $\mathbb{Z}$, and $\mathbb{C}$
denote the set of real, positive real, non-negative real, integer, and complex numbers, respectively. Given $i,j\in\mathbb{Z}$ with $i<j$, we define $\mathbb{Z}_i^j=\{i,i+1,
\cdots,j\}$. For  $s\in\mathbb{C}$, $\re{s}$ and $\im{s}$ represent,  respectively, the real and imaginary parts of $s$. Moreover, $|s|=\sqrt{\re{s}^2+\im{s}^2}$ and $\arg(s)=\text{atan2}(\im{s},\re{s})$. For any vector $\vect{x}\in\real^n$, we let $\|\vect{x}\|=\sqrt{x_1^2+\cdots+x_n^2}$ and $\|\vect{x}\|_\infty=\max\{x_i\}_{i=1}^n$.
For a measurable locally essentially bounded function $\vect{u}:\real_{\geq0}\to\real^m$, we define
$|\vect{u}|_{\infty}= \text{ess}\sup\{\|\vect{u}(t)\|_\infty,~t\geq0\}$. For a  matrix $\vect{A}$, its $i^{th}$ row is denoted by $[\vect{A}]_i$. 

For a linear time-delayed system, \emph{admissible delay range} is the range of time delay for which the internal dynamics of the system is 
stable. We recall that for linear time-delayed systems with exponentially stable dynamics when delay is set to zero, by virtue of the \emph{continuity stability property} theorem~\cite[Proposition 3.1]{SN:01}, the admissible delay range is a connected range $(0,\bar{\tau})\subset\real_{>0}$ where $\bar{\tau}\in\real_{>0}$ is the \emph{critical} delay bound beyond which the system is always~unstable.   
\begin{lem}[Admissible delay bound for a scalar time-delayed system {\cite[Proposition~3.15]{SN:01}}]
 \label{thm::stability_con}
 {\rm Consider 
 \begin{align}\label{DDE_scalar}
\dot{x}(t)&=\mathsf{a}\,x(t-\tau)+\mathsf{b}\,x(t),\quad \quad t\in\real_{\geq0},\\
x(\eta) &\in\real, \quad\quad  \quad\quad\quad  \quad \quad\quad  \quad \eta\in [-\tau,0],\nonumber
\end{align}
where $\mathsf{a}\in\real\backslash\{0\}$ and $\mathsf{a}+\mathsf{b}<0$. Then, the following assertions hold 
\begin{itemize}
    \item[(a)] For  $\mathsf{b}\leq-|\mathsf{a}|$, system~\eqref{DDE_scalar} is exponentially stable independent of the value of $\tau\in\real_{\geq0}$, i.e., $\bar{\tau}=\infty$.
    \item[(b)] For  $\mathsf{a}<-|\mathsf{b}|$, system~\eqref{DDE_scalar} is exponentially stable if and only if $\tau\in[0,\bar{\tau})$ where 
   \rm{ \begin{align}\label{eq::Admis_range_tau}
   \bar{\tau}=\frac{\text{arccos}(-\frac{\mathsf{b}}{\mathsf{a}})}{\sqrt{\mathsf{a}^2-\mathsf{b}^2}}.
    \end{align}}
\end{itemize}}
\boxend
\end{lem}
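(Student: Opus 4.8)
\emph{Proof proposal.} The plan is a standard D-subdivision (root-crossing) analysis of the characteristic function
\[
\Delta(s,\tau)=s-\mathsf{b}-\mathsf{a}\,\e{-s\tau},
\]
combined with the continuity stability property quoted just before the statement. At $\tau=0$ the unique characteristic root is $s=\mathsf{a}+\mathsf{b}<0$, so the system is exponentially stable. Since \eqref{DDE_scalar} is of retarded type, its spectrum is bounded to the right and the roots depend continuously on $\tau$; hence exponential stability can be lost only when a characteristic root crosses the imaginary axis, and by the continuity stability property the admissible range is a single interval $[0,\bar\tau)$ whose right endpoint $\bar\tau$ is the smallest $\tau>0$ for which $\Delta(\mathrm{j}\omega,\tau)=0$ has a real solution $\omega$. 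So everything reduces to locating that first crossing.

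First I would find the candidate crossing frequencies. Substituting $s=\mathrm{j}\omega$ in $\Delta(s,\tau)=0$ and separating real and imaginary parts gives $\mathsf{a}\cos(\omega\tau)=-\mathsf{b}$ and $\mathsf{a}\sin(\omega\tau)=-\omega$; squaring and adding eliminates $\tau$ and yields $\omega^2=\mathsf{a}^2-\mathsf{b}^2$. This already disposes of part~(a): if $\mathsf{b}\le-|\mathsf{a}|$ then $\mathsf{a}^2-\mathsf{b}^2\le 0$, so the only possibility is $\omega=0$ together with $\mathsf{b}=-|\mathsf{a}|$; but $s=0$ being a root forces $0=\mathsf{a}+\mathsf{b}$, contradicting $\mathsf{a}+\mathsf{b}<0$. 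Thus no characteristic root ever reaches the imaginary axis, the roots stay in the open left half-plane for every $\tau\ge0$, and (retardedness giving a finite, hence negative, supremum of the real parts) the system is exponentially stable for all $\tau\ge0$, i.e. $\bar\tau=\infty$.

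For part~(b), $\mathsf{a}<-|\mathsf{b}|$ gives $\mathsf{a}<0$ and $\omega_{\mathrm c}:=\sqrt{\mathsf{a}^2-\mathsf{b}^2}>0$; by conjugate symmetry it suffices to track $\omega=\omega_{\mathrm c}$. From $\cos(\omega_{\mathrm c}\tau)=-\mathsf{b}/\mathsf{a}\in(-1,1)$ and $\sin(\omega_{\mathrm c}\tau)=-\omega_{\mathrm c}/\mathsf{a}>0$, the smallest positive solution is $\omega_{\mathrm c}\tau=\arccos(-\mathsf{b}/\mathsf{a})\in(0,\pi)$, so the smallest crossing delay is exactly $\bar\tau=\arccos(-\mathsf{b}/\mathsf{a})/\sqrt{\mathsf{a}^2-\mathsf{b}^2}$, which is \eqref{eq::Admis_range_tau}. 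To make sure this $\bar\tau$ is genuinely the critical delay (and not a tangential touch that could leave the system stable), I would verify the crossing direction by implicit differentiation of $\Delta(s,\tau)=0$: using $\partial_s\Delta=1+\mathsf{a}\tau\e{-s\tau}$, $\partial_\tau\Delta=\mathsf{a}s\e{-s\tau}$ and the substitution $\mathsf{a}\e{-s\tau}=s-\mathsf{b}$ one gets $\dfrac{\mathrm ds}{\mathrm d\tau}=\dfrac{-s(s-\mathsf{b})}{1+\tau(s-\mathsf{b})}$, and evaluating at $s=\mathrm{j}\omega_{\mathrm c}$ a short computation gives $\re{\mathrm ds/\mathrm d\tau}=\omega_{\mathrm c}^2\big/\big((1-\tau\mathsf{b})^2+\tau^2\omega_{\mathrm c}^2\big)>0$. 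Hence every imaginary-axis crossing is strictly from left to right; together with exponential stability at $\tau=0$ and the connectedness of the admissible range this yields exponential stability precisely for $\tau\in[0,\bar\tau)$.

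The only genuinely delicate points are the bookkeeping in part~(a) at the boundary case $\mathsf{b}=-|\mathsf{a}|$ and, in general, the fact that characteristic roots cannot "appear from infinity" as $\tau$ varies; both are handled by the retarded structure of \eqref{DDE_scalar} and the continuity stability property cited above, so the argument boils down to the elementary trigonometric manipulations and the single sign computation for $\re{\mathrm ds/\mathrm d\tau}$.
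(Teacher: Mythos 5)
Your proposal is correct. Note that the paper does not prove this lemma at all --- it is quoted verbatim from the cited reference (Proposition 3.15 of the Niculescu monograph) and used as a black box --- so there is no in-paper argument to compare against; your D-subdivision argument is precisely the standard route that reference takes for first-order scalar delay equations. The key steps all check out: the crossing-frequency equation $\omega^2=\mathsf{a}^2-\mathsf{b}^2$ correctly rules out imaginary-axis roots when $\mathsf{b}\leq-|\mathsf{a}|$ (with the boundary case $\mathsf{b}=-|\mathsf{a}|$ correctly eliminated via $\mathsf{a}+\mathsf{b}<0$), the smallest crossing delay $\arccos(-\mathsf{b}/\mathsf{a})/\sqrt{\mathsf{a}^2-\mathsf{b}^2}$ is identified correctly from the sign of $\sin(\omega_{\mathrm c}\tau)=-\omega_{\mathrm c}/\mathsf{a}>0$, and the transversality computation giving $\re{\mathrm{d}s/\mathrm{d}\tau}=\omega_{\mathrm c}^2/\bigl((1-\tau\mathsf{b})^2+\tau^2\omega_{\mathrm c}^2\bigr)>0$ is what upgrades the first crossing to a genuine ``if and only if'' for the interval $[0,\bar{\tau})$. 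Your closing remarks on the retarded structure (roots bounded to the right, finitely many in any right half-plane, hence attained and negative spectral abscissa) are exactly the points needed to convert ``no roots in the closed right half-plane'' into exponential stability, so the argument is complete.
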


Lambert $W$ function specifies the solutions of $s\,\ee^{s}=z$ for a given $z\in\complex$, i.e., $s=W(z)$. It is a multivalued function with  infinite number of solutions denoted by $W_k(z)$, $k\in\integer$, where $W_k$ is called the $k^{\text{th}}$ branch of $W$ function.  For any $z\in\complex$, $W_k(z)$
can readily be evaluated in Matlab or Mathematica. 
Below are some of the intrinsic properties of the Lambert $W$ function, which we use (see~\cite{HS-TM:06,RMC-GHG-DEGH-DJJ-DEK:96}),
\begin{subequations}
\begin{align}
 \lim_{z\to 0}W_{k}(z)/z&=1, \label{eq::limWz-z}\\
\text{d}\,W_{k}(z)/\text{d}\,z&=1/(z+\ee^{W_{k}(z)}),\quad\quad\text{for~}z\neq 1/\ee,\label{eq::lambert_derivative}
\end{align}
\end{subequations}
for $k\in\mathbb{Z}$. For any $z\!\in\!\real$, the value of all the branches of the Lambert $W$ function except for the branch $0$ and the branch $-1$ are complex (non-zero imaginary part). Moreover, the zero branch satisfies $W_0(-1/\ee)\!=\!-1$, $W_0(0)\!=\!0$ 
~and  
\begin{subequations}\label{eq::W_0_facts}
\begin{align}
W_0(z)&\in\real,\,~\quad\quad\quad\quad\quad z\in[-1/\ee,\infty),  \label{eq::W_0_facts_a} \\
\im{W_0(z)}&\in(-\pi,\pi)\backslash\{0\},\quad z\in\complex\backslash[-1/\ee,\infty),\label{eq::W_0_facts_b}\\
\re{W_0(z)}&>-1, \quad\quad\quad\quad\quad z\in\real\backslash\{-1/\ee\},
\label{eq::W_0_facts_c}
\end{align}
\end{subequations}

\begin{lem}[Maximum real part of Lambert W function~\cite{HS-TM:06}]\label{lem::LambertReal}
{\rm For any $z\in\complex$, the following holds 
\begin{align}\label{eq::max-W0}
\re{W_0(z)}\geq \max\big\{\re{W_k(z)}|~k\in\integer\backslash\{0\}\big\}.
\end{align}
The equality holds between branch $0$ and $-1$ over  $z\in(-\infty,-\frac{1}{\ee})$  where we have $\re{W_0(z)}=\re{W_{-1}(z)}$.
\boxend}
\end{lem}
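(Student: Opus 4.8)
The plan is to turn \eqref{eq::max-W0} into a comparison of the moduli $|W_k(z)|$ and then to place all branches on a single explicit curve in the plane. Any $s$ with $s\,\ee^{s}=z$ satisfies $|s|\,\ee^{\re{s}}=|z|$, equivalently $\re{s}=\ln|z|-\ln|s|$. Taking $z\neq0$ (the case $z=0$ is immediate: $W_0(0)=0$ while $\re{W_k(0)}=-\infty$ for $k\neq0$) and $s=W_k(z)$, this reads $\re{W_k(z)}=\ln|z|-\ln|W_k(z)|$, so \eqref{eq::max-W0} is equivalent to $|W_0(z)|\le|W_k(z)|$ for all $k\neq0$ --- i.e.\ branch $0$ is the root of $s\,\ee^s=z$ of smallest modulus --- and equality in \eqref{eq::max-W0} corresponds exactly to equality of the moduli. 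Squaring the identity, every root lies on $\mathcal C_z=\{s\in\complex:\ |s|^2=|z|^2\ee^{-2\re{s}}\}$; a point of $\mathcal C_z$ with real part $x$ has squared modulus $|z|^2\ee^{-2x}$, which is strictly decreasing in $x$, so ordering the branches by modulus is the same as ordering them by real part, and on $\mathcal C_z$ the branches differ only in their imaginary parts.

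To pass from imaginary-part to real-part information I would use the scalar function $p(x)=x^2+\pi^2-|z|^2\ee^{-2x}$ on $\real$. Since $p(x)\to-\infty$ as $x\to-\infty$, $p(x)\to+\infty$ as $x\to+\infty$, and at every critical point $x_\star$ one has $|z|^2\ee^{-2x_\star}=-x_\star$ (so $x_\star<0$) and hence $p(x_\star)=x_\star^2+x_\star+\pi^2\ge\pi^2-\tfrac14>0$, the function $p$ has exactly one real zero $r=r(z)$, with $p<0$ on $(-\infty,r)$ and $p>0$ on $(r,\infty)$. Squaring the magnitude identity gives $|z|^2\ee^{-2\re{W_k(z)}}=(\re{W_k(z)})^2+(\im{W_k(z)})^2$, whence $p(\re{W_k(z)})=\pi^2-(\im{W_k(z)})^2$; therefore $\re{W_k(z)}$ is $>r$, $=r$, or $<r$ according as $|\im{W_k(z)}|$ is $<\pi$, $=\pi$, or $>\pi$. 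By \eqref{eq::W_0_facts_b} together with $W_0(z)\in\real$ for $z\in[-1/\ee,\infty)$ we always have $|\im{W_0(z)}|<\pi$, so $\re{W_0(z)}>r$; and for $k\notin\{0,-1\}$ the known strip structure of the Lambert branches gives $|\im{W_k(z)}|\ge\pi$, so $\re{W_k(z)}\le r<\re{W_0(z)}$, which is the strict form of \eqref{eq::max-W0}.

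It remains to treat $k=-1$ and read off the equality region. For $z\in[-1/\ee,0)$ both $W_0(z)$ and $W_{-1}(z)$ are real and $W_{-1}(z)\le-1\le W_0(z)$, with equality only at $z=-1/\ee$. For $z\in(-\infty,-1/\ee)$, $W_0(z)$ and $W_{-1}(z)$ are the complex-conjugate pair of roots of $s\,\ee^s=z$ with imaginary part in $(-\pi,\pi)$, so $\re{W_0(z)}=\re{W_{-1}(z)}$, while the argument above applied to every remaining $k$ shows this common value is the maximum --- exactly the equality region claimed. For $z$ outside a neighbourhood of the segment $[-1/\ee,0)$ the strip structure again gives $|\im{W_{-1}(z)}|\ge\pi$, so that branch is dominated, and the remaining neighbourhood of that segment --- where branches $0$ and $-1$ coalesce at $z=-1/\ee$ --- is handled by the two displayed estimates together with continuity.

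The step I expect to be the genuine obstacle is exactly the input used twice above: that $|\im{W_k(z)}|\ge\pi$ for every $k\notin\{0,-1\}$. Establishing this requires the detailed geometry of the ranges of the non-principal Lambert branches (the strip-like regions described in \cite{RMC-GHG-DEGH-DJJ-DEK:96}) and careful bookkeeping near the branch point $z=-1/\ee$ where branches $0$ and $-1$ merge; everything else --- the reduction to moduli, the curve $\mathcal C_z$, and the sign analysis of $p$ --- is then routine.
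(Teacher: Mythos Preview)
The paper does not prove this lemma; it quotes it from \cite{HS-TM:06} and uses it as a black box, so there is no in-paper argument to compare your proposal against.

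Your reduction is correct and is the natural route: $|s|\,\ee^{\re{s}}=|z|$ gives $\re{s}=\ln|z|-\ln|s|$, so maximizing the real part over branches is the same as minimizing the modulus, and your scalar function $p(x)=x^2+\pi^2-|z|^2\ee^{-2x}$ correctly converts the strip datum $|\im{W_k(z)}|\lessgtr\pi$ into the threshold $\re{W_k(z)}\gtrless r$. The sign analysis of $p$ is fine, and the input $|\im{W_k(z)}|\ge\pi$ for $k\notin\{0,-1\}$ is precisely the branch construction of \cite{RMC-GHG-DEGH-DJJ-DEK:96}, which you rightly flag as the external ingredient.

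The one genuine gap is your handling of $k=-1$ off the negative real axis. The claim that $|\im{W_{-1}(z)}|\ge\pi$ once $z$ leaves a neighbourhood of the segment $[-1/\ee,0)$ is not accurate: the range of $W_{-1}$ contains an unbounded region with imaginary part in $(-\pi,0)$ (lying between the half-line $(-\infty,-1]$ and the boundary curve shared with $W_0$), and its preimage in the $z$-plane is a one-sided neighbourhood of the entire ray $(-\infty,0]$, not just of the finite segment. On that set both $W_0$ and $W_{-1}$ land on the same side of your threshold $r$, so the $p$-argument cannot separate them, and the closing appeal to ``continuity'' is not a proof. A clean way to close this: on the connected domain $\complex\setminus(-\infty,0]$ both branches are analytic and never coincide; if their real parts agreed at some $z$ then, by your own modulus identity, their moduli would agree, forcing $W_0(z)=\overline{W_{-1}(z)}$ and hence $z=\bar z\in(0,\infty)$, where $W_0(z)$ is real and cannot be the conjugate of a distinct root. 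Thus $\re{W_0}-\re{W_{-1}}$ is continuous and nonvanishing on a connected set; evaluating at $z=1$ (where $W_{-1}(1)=\overline{W_1(1)}$ already falls under your strip bound) fixes the sign as positive. The equality statement on $(-\infty,-1/\ee)$ then follows from the standard branch-cut convention under which $W_0$ and $W_{-1}$ are complex conjugates there.
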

\begin{lem}[$W_0(\mathsf{x})$ is an increasing function of $x\in\real_{>0}$]\label{lem::W0-increase-xpos} 
{\rm For any $\mathsf{x},\mathsf{y}\in\real_{>0}$ if $\mathsf{x}< \mathsf{y}$, then $W_0(\mathsf{x})<W_0(\mathsf{y})$.}
\end{lem}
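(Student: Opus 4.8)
The plan is to exploit the defining identity of the Lambert $W$ function --- that $w=W_0(\mathsf{x})$ is the principal-branch solution of $w\,\ee^{w}=\mathsf{x}$ --- together with the derivative formula~\eqref{eq::lambert_derivative}. In one sentence: I will show that on $\real_{>0}$ the map $W_0$ is the inverse of the strictly increasing function $f(w)=w\,\ee^{w}$ restricted to $(0,\infty)$, equivalently that its derivative is strictly positive there, whence strict monotonicity follows.

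First I would record that $W_0(\mathsf{x})\in\real$ for every $\mathsf{x}\in\real_{>0}$ by~\eqref{eq::W_0_facts_a}. Next I would pin down its sign: since $W_0(\mathsf{x})\,\ee^{W_0(\mathsf{x})}=\mathsf{x}>0$ and the real map $w\mapsto w\,\ee^{w}$ is nonpositive for $w\le 0$ and positive for $w>0$, it follows that $W_0(\mathsf{x})>0$ for all $\mathsf{x}>0$; in particular $\ee^{W_0(\mathsf{x})}>0$. Then, applying~\eqref{eq::lambert_derivative} with $k=0$, for every $\mathsf{x}\in\real_{>0}$ at which the formula applies we obtain $\mathrm{d}W_0(\mathsf{x})/\mathrm{d}\mathsf{x}=1/(\mathsf{x}+\ee^{W_0(\mathsf{x})})>0$, because $\mathsf{x}>0$ and $\ee^{W_0(\mathsf{x})}>0$ make the denominator positive. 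Since $W_0$ is continuous on $[-1/\ee,\infty)\supset\real_{>0}$ and has a strictly positive derivative at all but (at most) one point of $\real_{>0}$, the mean value theorem yields $W_0(\mathsf{x})<W_0(\mathsf{y})$ whenever $0<\mathsf{x}<\mathsf{y}$, which is the claim. As an alternative route that avoids calculus, one notes that $f'(w)=(1+w)\,\ee^{w}>0$ for $w>0$, so $f$ is a strictly increasing bijection of $(0,\infty)$ onto $(0,\infty)$ and $W_0$ restricted to $\real_{>0}$ is precisely $f^{-1}$, hence strictly increasing.

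This statement has no genuine obstacle; the only points requiring care are domain bookkeeping --- confirming via~\eqref{eq::W_0_facts_a} and~\eqref{eq::W_0_facts_c} that $W_0$ is real (indeed positive) on $\real_{>0}$ so that the quantities in~\eqref{eq::lambert_derivative} are well defined and the denominator is positive --- and the exclusion attached to~\eqref{eq::lambert_derivative}, which is harmless because a continuous function that is strictly increasing on each of the two subintervals into which that single excluded point splits $\real_{>0}$ is strictly increasing on all of $\real_{>0}$.
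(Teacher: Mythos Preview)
Your proof is correct and follows essentially the same approach as the paper's own proof: both use the derivative formula~\eqref{eq::lambert_derivative} together with the fact that $W_0(\mathsf{x})\in\real_{>0}$ for $\mathsf{x}\in\real_{>0}$ to conclude that $\mathrm{d}W_0(\mathsf{x})/\mathrm{d}\mathsf{x}>0$. Your version is simply more detailed in its domain bookkeeping and in handling the excluded point of~\eqref{eq::lambert_derivative}, and your alternative inverse-function argument is a nice addition.
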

\begin{proof}
The proof follows from the fact that for $\mathsf{x}\in\real_{>0}$, $W_0(\mathsf{x})\in\real_{>0}$.  Therefore, 
$\frac{\text{d}\,W_0(\mathsf{x})}{\text{d}\,\mathsf{x}}=\frac{1}{\mathsf{x}+\ee^{W_0(\mathsf{x})}}>0$.
\end{proof}

We follow~\cite{FB-JC-SM:09} to define our graph related terminologies and notations. In a network of $N$ agents, we model the inter-agent interaction topology by the undirected connected graph
 $\mathcal{G}(\mathcal{V},\mathcal{E},\vectsf{A})$ where $\mathcal{V}$ is the node set, $\mathcal{E}\subset\mathcal{V}\times \mathcal{V}$ is the edge set and $\vectsf{A}=[a_{ij}]$ is the adjacency matrix of the graph. Recall that $a_{ii}=0$, $a_{ij}\in\real_{>0}$ if $j\in\VV$ can send information to agent $i\in\VV$, and zero otherwise. Moreover, a graph is undirected if the connection between the nodes is bidirectional and $a_{ij}=a_{ji}$ if $(i,j)\in\mathcal{E}$. Finally, an undirected graph is connected if there is a path from every agent to every other agent in the network (see e.g.~Fig.~\ref{fig:network}). Here, $\vect{L}=\text{Diag}(\vectsf{A}\vect{1}_N)-\vectsf{A}$ is the Laplacian matrix of the graph $\mathcal{G}$. The Laplacian matrix of a connected undirected graph is a symmetric positive semi-definite matrix that has a simple $\lambda_1=0$ eigenvalue, and the rest of its eigenvalues satisfy $\lambda_1=0< \lambda_2\leq\cdots\leq\lambda_N$.  Moreover, $\vect{L}\vect{1}_N=\vect{0}$. Since $\vect{L}$ of a connected undirected graph is a symmetric and real matrix, its normalized eigenvectors $v_1=\frac{1}{\sqrt{N}}\vect{1}_N,v_2,\cdots,v_N$ are mutually orthogonal. Moreover for  \begin{align}\label{eq::T}\vect{T}=\begin{bmatrix}\frac{1}{\sqrt{N}}\vect{1}_N&\vect{R}\end{bmatrix},\quad \vect{R}=\begin{bmatrix}v_2&\cdots&v_N\end{bmatrix}
\end{align}
we have $\vect{T}^\top\vect{L}\vect{T}=\vect{\Lambda}=\Diag{0,\lambda_2,\cdots,\lambda_N}$. We note that for any $\vect{q}\in\real^N$, we have $\|\vect{R}^\top\vect{q}\|=\|(\vect{I}_N-\frac{1}{N}\vect{1}_N\vect{1}_N^\top)\,\vect{q}\|$.


\section{Problem Definition}\label{sec::Prob_formu}
We consider a group of $N$ agents 
each endowed with a one-sided time-varying 
measurable locally essentially bounded signal $\mathsf{r}^i:\real_{\geq0}\to\real$, interacting over a connected undirected graph $\mathcal{G}(\mathcal{V},\mathcal{E},\vectsf{A})$. To obtain the average of their reference inputs, $\mathsf{r}^{\text{avg}}(t)=\frac{1}{N}\sum_{i=1}^N\mathsf{r}^i(t)$, these agents implement the distributed algorithm  
\begin{align}
   & \dot{x}^i(t)=-\alpha\,\sum\nolimits_{j=1}^N\!\!a_{ij}(x^j(t)\!-\!x^i(t))+\dot{\mathsf{r}}^i,~~i\in\VV,\label{eq::consensus-orig}\\ &x^i(0)=\mathsf{r}^i(0),\nonumber
\end{align}
where $\alpha\in\real_{>0}$. When the reference inputs of the agents are all static, i.e., $\dot{\mathsf{r}}^i=0$ for all $i\in\VV$,~\eqref{eq::consensus-orig} becomes the well-known Laplacian static average consensus algorithm that converges exponentially to $\mathsf{x}^{\text{avg}}(0)=\mathsf{r}^{\text{avg}}=\frac{1}{N}\sum_{j=1}^N \mathsf{r}^j$,  with the rate of convergence  $\rho_0=\alpha\lambda_2$ (for details see~\cite{ROS-JAF-RMM:07}). When one or more of the input signals are time-varying,~\eqref{eq::consensus-orig} is the dynamic average consensus algorithm of~\cite{dps-ros-rmm:05b}.
The convergence guarantee of~\eqref{eq::consensus-orig} is as follows.
\begin{thm}[Convergence of~\eqref{eq::consensus-orig} over an undirected connected graph~\cite{SSK-BVS-JC-RAF-KML-SM}]\label{thm::consensus-orig-conv}
{\rm Let $\GG$ be a connected undirected graph. Let $\|(\vect{I}_N-\frac{1}{N}\vect{1}_N\vect{1}_N^\top)\dvectsf{r}\|_{\infty}=\gamma<\infty$.  Then, for
  any $\alpha\in\realpositive$, the trajectories of algorithm~\eqref{eq::consensus-orig} are bounded and satisfy
\begin{equation}\label{eq::Alg_D_ultimate_bound} 
    \lim_{t\to\infty} \Big| x^i(t)-\mathsf{r}^{\text{avg}}(t) \Big| \leq 
    \eps_0,\quad\quad i\in\VV,
  \end{equation}
  where $\eps_0=\frac{\gamma}{ 
      \rho_{0}}$ and $\rho_0=\alpha\lambda_2$.  Moreover, The rate of convergence to this error neighborhood is no worse than $\rho_{0}$.}
\end{thm}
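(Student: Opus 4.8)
The plan is to pass to the standard average/disagreement decomposition and then treat the disagreement coordinates as an exponentially stable linear system driven by a bounded exogenous input. First I would write~\eqref{eq::consensus-orig} in compact form $\dvect{x}(t)=-\alpha\vect{L}\vect{x}(t)+\dvectsf{r}(t)$, $\vect{x}(0)=\vectsf{r}(0)$, with $\vect{x}=(x^1,\dots,x^N)^\top$. Using the orthogonal matrix $\vect{T}=\begin{bmatrix}\frac{1}{\sqrt N}\vect{1}_N&\vect{R}\end{bmatrix}$ of~\eqref{eq::T}, introduce the modal coordinates $z_1=\frac{1}{\sqrt N}\vect{1}_N^\top\vect{x}$ and $\vect{z}=\vect{R}^\top\vect{x}\in\real^{N-1}$; since $\vect{T}^\top\vect{L}\vect{T}=\Diag{0,\lambda_2,\dots,\lambda_N}$, the dynamics block-diagonalize into $\dot z_1$ and $\dvect{z}$.

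Second, I would show the average mode is reproduced exactly. Because $\vect{1}_N^\top\vect{L}=\vect{0}$, we get $\dot z_1=\frac{1}{\sqrt N}\vect{1}_N^\top\dvectsf{r}=\sqrt N\,\dot{\mathsf r}^{\text{avg}}(t)$, and since $z_1(0)=\frac{1}{\sqrt N}\vect{1}_N^\top\vectsf{r}(0)=\sqrt N\,\mathsf r^{\text{avg}}(0)$, integration gives $z_1(t)=\sqrt N\,\mathsf r^{\text{avg}}(t)$ for all $t\ge0$. Hence $x^i(t)-\mathsf r^{\text{avg}}(t)=\big[(\vect{I}_N-\tfrac1N\vect{1}_N\vect{1}_N^\top)\vect{x}(t)\big]_i=[\vect{R}\vect{z}(t)]_i$, so that $|x^i(t)-\mathsf r^{\text{avg}}(t)|\le\|\vect{R}\vect{z}(t)\|=\|\vect{z}(t)\|$ because $\vect{R}$ has orthonormal columns. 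Thus it suffices to bound $\|\vect{z}(t)\|$.

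Third, the disagreement coordinates obey $\dvect{z}(t)=-\alpha\bar{\vect{\Lambda}}\vect{z}(t)+\vect{R}^\top\dvectsf{r}(t)$ with $\bar{\vect{\Lambda}}=\Diag{\lambda_2,\dots,\lambda_N}\succeq\lambda_2\vect{I}_{N-1}\succ\vect{0}$ (here connectedness enters, via $\lambda_2>0$), and the forcing satisfies $\|\vect{R}^\top\dvectsf{r}(t)\|=\|(\vect{I}_N-\tfrac1N\vect{1}_N\vect{1}_N^\top)\dvectsf{r}(t)\|\le\gamma$. Taking $V=\tfrac12\|\vect{z}\|^2$ yields $\dot V\le-\alpha\lambda_2\|\vect{z}\|^2+\gamma\|\vect{z}\|$, i.e. $\frac{\mathrm{d}}{\mathrm{d}t}\|\vect{z}(t)\|\le-\rho_0\|\vect{z}(t)\|+\gamma$ with $\rho_0=\alpha\lambda_2$. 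The comparison lemma then gives $\|\vect{z}(t)\|\le\ee^{-\rho_0 t}\|\vect{z}(0)\|+\tfrac{\gamma}{\rho_0}(1-\ee^{-\rho_0 t})$, equivalently $\|\vect{z}(t)\|-\eps_0\le\ee^{-\rho_0 t}(\|\vect{z}(0)\|-\eps_0)$ with $\eps_0=\gamma/\rho_0$. This delivers boundedness of $\vect{z}$ (and of $\vect{x}$ once $\mathsf r^{\text{avg}}$ is noted to be bounded), the ultimate bound $\limsup_{t\to\infty}|x^i(t)-\mathsf r^{\text{avg}}(t)|\le\eps_0$, and convergence to this neighborhood at a rate no slower than $\rho_0$.

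I expect the obstacles to be purely technical rather than conceptual: the step dividing $\dot V$ by $\|\vect{z}\|$ must be justified at instants where $\vect{z}(t)=\vect{0}$ (use the upper Dini derivative of $t\mapsto\|\vect{z}(t)\|$, which still satisfies the scalar comparison inequality), and since $\dvectsf{r}$ is only measurable and locally essentially bounded the solution is a Carathéodory solution and the inequalities hold for almost every $t$, so the $\mathrm{ess\,sup}$ bound $\gamma$ propagates correctly. No graph structure beyond $\lambda_1=0$ being simple, $\lambda_2>0$, and the orthonormal Laplacian eigenbasis is needed.
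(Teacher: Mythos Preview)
Your proposal is correct, but note that the paper does not actually prove this statement: Theorem~\ref{thm::consensus-orig-conv} is quoted from~\cite{SSK-BVS-JC-RAF-KML-SM} as a known result. What the paper does prove is the delayed generalization, Theorem~\ref{thm::Alg_delay}, which specializes to this one at $\tau=0$.

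Your route---the same modal decomposition via $\vect{T}$, then a Lyapunov/comparison-lemma bound on $\|\vect{z}\|$---is the standard delay-free argument and is more direct than what the paper does for the delayed case. There, the identical eigen-decomposition~\eqref{eq::LaclacianEquivalent} is followed not by a Lyapunov step but by the explicit infinite-series solution~\eqref{eq::Z_2N_sol} in terms of the Lambert $W$ branches, and the ultimate bound comes from the identity $\sum_{j\in\mathbb Z}\vect{C}^i_j/\vect{S}^i_j=1/(\alpha\lambda_{i+1})$, obtained by integrating the scalar zero-input equations over $[0,\infty)$. That heavier machinery is required because the delayed term destroys the clean dissipation inequality $\dot V\le -\rho_0\|\vect{z}\|^2+\gamma\|\vect{z}\|$; for $\tau=0$ your approach is both sufficient and simpler.
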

In this paper, with the intention of using outdated information to accelerate the convergence, we alter the average consensus algorithm~\eqref{eq::consensus-orig} to (compact representation)
\begin{subequations}\label{eq::consensus}
\begin{align}
&\dvect{x}(t)=-\alpha\,(1-\mathsf{k})\,\vect{L}\,\vect{x}(t)-\alpha\,\mathsf{k}\,\vect{L}\,\vect{x}(t-\tau)+\dvectsf{r},\label{eq::dyn_consensus_doty}\\
& x^i(0)=\mathsf{r}^i(0),~~
x^i(\eta)=0~\text{for}~\eta\in[-\tau,0),\quad i\in\VV,\label{eq::dyn_consensus_y_init}
\end{align}
\end{subequations}
 for $t\in\real_{\geq0}$, where $\mathsf{k}\in\real$ and $\tau\in\real_{\geq0}$. For $\mathsf{k}=0$,~\eqref{eq::consensus} recovers the original algorithm~\eqref{eq::consensus-orig}. We refer to  $\mathsf{k}$ as \emph{split factor}. 

To simply analyzing the convergence properties of~\eqref{eq::consensus}, we implement the change of variable  
\begin{align}\label{eq::Def_trans}
\vect{z}(t)=\vect{T}^\top(\vect{x}(t)-\mathsf{r}^\text{avg}(t)\vect{1}_N)
\end{align}
(recall~\eqref{eq::T}) to write~\eqref{eq::consensus} in equivalent form
\begin{subequations}\label{eq::LaclacianEquivalent}
\begin{align}
    \dot{z}_1(t)&=0,\quad\quad\quad {z}_1(0)=0, \label{eq::laclacian_equivalent_a}\\
\dvect{z}_{2:N}(t)&=-\alpha(1-\mathsf{k})\Bvect{\Lambda}\,\vect{z}_{2:N}(t) -\alpha\mathsf{k}\,\Bvect{\Lambda} \vect{z}_{2:N}(t-\tau)\nonumber\\
&~\quad+\vect{R}^\top\dvectsf{r}(t),\label{eq::LaclacianEquivalent_b}\\
\vect{z}_{2:N}(0)&=\vect{R}^\top\vectsf{r}(0), ~ \vect{z}_{2:N}(\eta)=\vect{0}~\text{for}~ \eta\in[-\tau,0), \label{eq::LaclacianEquivalent_bb}
\end{align}
\end{subequations}
where $\Bvect{\Lambda}=\Diag{\lambda_2,\cdots,\lambda_N}$. Under the given initial condition, the tracking error then is  
\begin{align}\label{eq::tracking_error}
\vect{x}(t)-\mathsf{r}^\text{avg}(t)\vect{1}_N=\vect{R}\,\vect{z}_{2:N}(t),\quad t\in\real_{\geq0}.
\end{align}
Using the method that specifies the solution of linear time-delayed systems~\cite{SY-PWN-AGU:10}, the  trajectory of~\eqref{eq::LaclacianEquivalent_b} under initial condition~\eqref{eq::LaclacianEquivalent_bb} is
\begin{align}\label{eq::Z_2N_sol}
 \vect{z}_{2:N}(t)\!=\!\sum_{j\in\mathbb{Z}}\!\ee^{\vect{S}_jt}\vect{C}_j\,\vect{z}_{2:N}(0)\!+\!\!\int_{0}^{t}\!\sum_{j\in\mathbb{Z}}\!\ee^{\vect{S}_j(t-\zeta)}\vect{C}_j^{'}\vect{R}^\top\dot{\vectsf{r}}(\zeta)\text{d}\zeta,   
\end{align}
where 
\begin{subequations}\label{eq::coefficients_infseries}
\begin{align}
\vect{S}_j&=\Diag{\vect{S}^1_j,\dots,\vect{S}_j^{N-1}},\label{eq::S_j}\\
\vect{S}^i_j&\!=\!\frac{1}{\tau}\vect{W}_j(-\alpha\mathsf{k}\lambda_{i+1}\tau\ee^{\alpha(1-\mathsf{k})\lambda_{i+1}\tau})\!-\!\alpha(1-\mathsf{k}){\lambda}_{i+1}\label{eq::S_j^i}\\
\vect{C}_j&=\Diag{\vect{C}_j^1,\dots,\vect{C}_j^{N-1}},\\
~\vect{C}_j^i&=\frac{1}{1-\alpha\mathsf{k}\lambda_{i+1}\tau\ee^{-\vect{S}_j^i\tau}},
\end{align}
\end{subequations}
and $\vect{C}_j'=\vect{C}_j$ because of the given initial conditions.
When the reference input signals satisfy the condition given in Theorem~\ref{thm::consensus-orig-conv}, it follows from~\eqref{eq::Z_2N_sol} that $\vect{z}_{2:N}$ in the admissible delay range should converge exponentially to some neighborhood of zero, whose size is proportional to $\gamma$.  Moreover, the rate of convergence of algorithm~\eqref{eq::consensus} is $\rho_{\tau}(\mathsf{k})={\min}\{\{-\re{\vect{S}^i_j}\}_{i=1}^{N-1}\}_{j=-\infty}^{\infty}$. By invoking Lemma~\ref{lem::LambertReal}, $\rho_{\tau}(\mathsf{k})$  simplifies to $\rho_{\tau}(\mathsf{k})={\min}\{-\re{\vect{S}^i_0}\}_{i=1}^{N-1}$, which reads as
\begin{align}\label{eq::rate_consensus_d1}
    \rho_{\tau}(\mathsf{k})=\min\Big\{-\re{\frac{1}{\tau}&\vect{W}_0(-\alpha\mathsf{k}\lambda_i\tau\ee^{\alpha(1-\mathsf{k})\lambda_i\tau})}\nonumber\\
    &~~\qquad\qquad+\alpha(1-\mathsf{k}){\lambda}_i\Big\}_{i=2}^N. 
\end{align}
 For further discussion about the convergence rate of linear time delayed systems see~\cite[Corollary~1]{SD-JN-AGU:11}. 


 Our objective in this paper is to show that by splitting the disagreement feedback into a current $-\alpha\,(1-\mathsf{k})\,\vect{L}\,\vect{x}(t)$ and an outdated $-\alpha\,(1-\mathsf{k})\,\vect{L}\,\vect{x}(t-\tau)$ components, it is possible to increase the rate of convergence of  algorithm~\eqref{eq::consensus}. Specifically, we determine for what values of $\mathsf{k}$, there exists ranges of time delay that the rate of convergence of~\eqref{eq::consensus} increases (ranges of delay for which decay rate of the transient response of~\eqref{eq::consensus} increases). We also specify the maximum reachable rate due to delay and its corresponding maximizer delay. One may argue that the rate of convergence of~\eqref{eq::consensus-orig} can be increased by `cranking up' the gain $\alpha$. However, this choice leads to increase in the control effort of the agents. In our study, then, we set to identify values of split factor $\mathsf{k}$ for which for a fixed $\alpha$ the increase in the convergence rate of~\eqref{eq::consensus} due to delay  in comparison to~\eqref{eq::consensus-orig} is without increasing the control effort. Finally, we prove that for delays in the admissible delay bound, the ultimate tracking error of~\eqref{eq::consensus} is the same as~\eqref{eq::Alg_D_ultimate_bound}. This assertion, increases the appeal of the modified average consensus algorithm~\eqref{eq::consensus} as an effective algorithm that yields faster convergence than the original algorithm~\eqref{eq::consensus-orig}.  We close this section by noting that following the change of variable method proposed in~\cite{SSK-BVS-JC-RAF-KML-SM}, algorithm~\eqref{eq::dyn_consensus_doty} can be implemented in the alternative way (recall that $\mathsf{r}^i$ is a one-sided signal)
\begin{align*}
&\dot{y}^i(t)=-\alpha\,(1-\mathsf{k})\,\vect{L}\,\vect{x}(t)-\alpha\,\mathsf{k}\,\vect{L}\,\vect{x}(t-\tau),\\
&x^i(t)=y^i(t)+\mathsf{r}^i(t),\\
&y^i(0)=0,\,\,x^i(\eta)=0~\text{for}~\eta\in[-\tau,0),\quad i\in\VV,
\end{align*}
which does not require knowledge of derivative of the reference input of the agents . 

\section{Accelerating average consensus using outdated feedback}\label{sec::main}
In this section, we study the effect of the outdated feedback on the convergence rate and the ultimate tracking response of the modified average consensus algorithm~\eqref{eq::consensus}. 

To start our study, we identify the admissible delay range $(0,\bar{\tau})$ for algorithm~\eqref{eq::consensus} for different values of split factor $\mathsf{k}$. Given the tracking error~\eqref{eq::tracking_error}, the admissible delay bound is determined by the ranges of delay for which the zero input dynamics of~\eqref{eq::LaclacianEquivalent_b} preserves its exponential stability.

\begin{lem}[
Admissible rage of delay for internal stability of algorithm~\eqref{eq::consensus}]\label{lem::admis_delay_modified_con}
{\rm The following assertions hold for the modified average consensus algorithm~\eqref{eq::consensus} over an undirected connected graph (recall~\eqref{eq::S_j^i}).   
\begin{itemize}
    \item[(a)] For  $\mathsf{k}\leq 0.5$, the modified average consensus  algorithm~\eqref{eq::consensus} is internally stable  for any $\tau\in\real_{\geq0}$,  i.e., $\bar{\tau}=\infty$. 
     \item[(b)] 
     For  $\mathsf{k}>0.5$, the modified average consensus algorithm~\eqref{eq::consensus} is internally stable if and only if $\tau\in[0,\bar{\tau})$, where 
    \begin{align}\label{eq::consensus_delay_bounded}
         \bar{\tau}=\arccos(1-1/\mathsf{k})/(|\alpha|\lambda_N\sqrt{2\mathsf{k}-1}).
    \end{align}
\end{itemize}
Also, for any $\tau\in[0,\bar{\tau})$, we have  $\lim_{t\to\infty} \ee^{\vect{S}_j^it}=0$, $i\in\mathbb{Z}_1^{N-1}$ and $j\in\mathbb{Z}$. Moreover, under the initial condition~\eqref{eq::dyn_consensus_y_init}, the trajectories of $x^i$, $i\in\VV$ of the zero-input dynamics of algorithm~\eqref{eq::consensus} converges exponentially fast to $\mathsf{x}^{\text{avg}}(0)$. }
\end{lem}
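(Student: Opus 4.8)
The plan is to diagonalize the Laplacian out of~\eqref{eq::consensus}, reducing internal stability to the scalar delay test of Lemma~\ref{thm::stability_con} applied to each disagreement mode, and then to take the worst mode. First I would recall that the zero-input part of~\eqref{eq::LaclacianEquivalent_b} is block diagonal in $\Bvect{\Lambda}$, so it decouples into the $N-1$ scalar equations $\dot{z}_i(t)=-\alpha(1-\mathsf{k})\lambda_i z_i(t)-\alpha\mathsf{k}\lambda_i z_i(t-\tau)$, $i\in\{2,\dots,N\}$, each of the form~\eqref{DDE_scalar} with $\mathsf{a}=-\alpha\mathsf{k}\lambda_i$ and $\mathsf{b}=-\alpha(1-\mathsf{k})\lambda_i$; note $\mathsf{a}+\mathsf{b}=-\alpha\lambda_i<0$ always, and $\mathsf{a}\neq 0$ whenever $\mathsf{k}\neq 0$ (with $\mathsf{k}=0$ being the delay-free, trivially exponentially stable algorithm~\eqref{eq::consensus-orig}). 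Since the characteristic roots of mode $i$ are exactly the numbers $\{\vect{S}_j^i\}_{j\in\integer}$ of~\eqref{eq::S_j^i} (the Lambert-$W$ representation underlying~\eqref{eq::Z_2N_sol}), internal stability of~\eqref{eq::consensus} is equivalent to $\re{\vect{S}_j^i}<0$ for all $i$ and all $j$, and by Lemma~\ref{lem::LambertReal} it suffices to inspect the $j=0$ branch.

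For part (a), I would check that $\mathsf{k}\le 0.5$ forces $1-\mathsf{k}\ge|\mathsf{k}|$ — for $0\le\mathsf{k}\le 0.5$ this is $1-\mathsf{k}\ge 0.5\ge\mathsf{k}$, and for $\mathsf{k}<0$ it collapses to $1\ge 0$ — hence $\mathsf{b}\le-|\mathsf{a}|$ for every mode, so Lemma~\ref{thm::stability_con}(a) gives exponential stability of each mode for all $\tau\in\real_{\geq0}$, i.e. $\bar{\tau}=\infty$. For part (b), $\mathsf{k}>0.5$ gives $\mathsf{k}>|1-\mathsf{k}|$, i.e. $\mathsf{a}<-|\mathsf{b}|$, so Lemma~\ref{thm::stability_con}(b) applies and mode $i$ is exponentially stable iff $\tau<\bar{\tau}_i=\arccos(-\mathsf{b}/\mathsf{a})/\sqrt{\mathsf{a}^2-\mathsf{b}^2}$; substituting $-\mathsf{b}/\mathsf{a}=-(1-\mathsf{k})/\mathsf{k}=1-1/\mathsf{k}$ and $\mathsf{a}^2-\mathsf{b}^2=\alpha^2\lambda_i^2(2\mathsf{k}-1)$ yields $\bar{\tau}_i=\arccos(1-1/\mathsf{k})/(|\alpha|\lambda_i\sqrt{2\mathsf{k}-1})$. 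The full algorithm is internally stable iff every mode is, i.e. iff $\tau<\min_i\bar{\tau}_i$; since the numerator is common to all $i$ and $\bar{\tau}_i$ is strictly decreasing in $\lambda_i$, the minimum is $\bar{\tau}_N$, which is precisely~\eqref{eq::consensus_delay_bounded}.

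To close, for any $\tau\in[0,\bar{\tau})$ internal stability gives $\re{\vect{S}_j^i}<0$ for all $i\in\mathbb{Z}_1^{N-1}$ and $j\in\integer$, hence $\ee^{\vect{S}_j^i t}\to 0$ exponentially. In the zero-input case $\dvectsf{r}\equiv\vect{0}$, so $\mathsf{r}^{\text{avg}}(t)\equiv\mathsf{x}^{\text{avg}}(0)$; setting $\dvectsf{r}=\vect{0}$ in~\eqref{eq::Z_2N_sol} leaves only the homogeneous sum, which decays exponentially, so $\vect{z}_{2:N}(t)\to\vect{0}$, while $z_1(t)\equiv z_1(0)=0$ by~\eqref{eq::laclacian_equivalent_a}. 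Combined with the reconstruction $\vect{x}(t)-\mathsf{x}^{\text{avg}}(0)\vect{1}_N=\vect{T}\vect{z}(t)=\vect{R}\,\vect{z}_{2:N}(t)$ (cf.~\eqref{eq::tracking_error}), this shows $x^i(t)\to\mathsf{x}^{\text{avg}}(0)$ exponentially for all $i\in\VV$.

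I expect the only delicate points to be bookkeeping the magnitude inequalities ($\mathsf{b}\le-|\mathsf{a}|$ versus $\mathsf{a}<-|\mathsf{b}|$) against the split-factor ranges — including the boundary $\mathsf{k}=0.5$ and the degenerate $\mathsf{k}=0$, which must be disposed of separately — together with the observation that the critical delay is dictated by the largest Laplacian eigenvalue $\lambda_N$ (so that the minimum over modes is attained at $i=N$). Everything else is a mode-by-mode invocation of Lemma~\ref{thm::stability_con}.
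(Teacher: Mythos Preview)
Your proposal is correct and follows essentially the same route as the paper: diagonalize via~\eqref{eq::LaclacianEquivalent} to decouple into the scalar modes~\eqref{eq::laclacian_equivalent}, apply Lemma~\ref{thm::stability_con} to each with $\mathsf{a}=-\alpha\mathsf{k}\lambda_i$, $\mathsf{b}=-\alpha(1-\mathsf{k})\lambda_i$, and take $\bar{\tau}=\min_i\bar{\tau}_i=\bar{\tau}_N$. Your handling of the sign bookkeeping (splitting $\mathsf{k}\le 0.5$ into $\mathsf{k}\in[0,0.5]$ and $\mathsf{k}<0$, and disposing of $\mathsf{k}=0$ separately) is in fact a bit more explicit than the paper's, but the argument is the same.
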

\begin{proof}
Consider the zero-input dynamics of~\eqref{eq::LaclacianEquivalent}, the equivalent representation of zero dynamics of algorithm~\eqref{eq::consensus}.  
It is evident that the delay tolerance of~\eqref{eq::LaclacianEquivalent} is defined by the dynamics of states $\vect{z}_{2:N}$. Note that~\eqref{eq::LaclacianEquivalent_b} because of definition of $\Bvect{\Lambda}$ reads also as
\begin{align}\label{eq::laclacian_equivalent}
\dot{z}_i(t)&\!=\!-\alpha(1-\mathsf{k})\lambda_i\,z_{i}(t) -\alpha\mathsf{k}\,\lambda_i z_{i}(t-\tau)
,~~~i\in\mathbb{Z}_{2}^N.
\end{align}
When $\mathsf{k}\leq0.5$ we have $-\alpha\lambda_i(1-\mathsf{k})\leq|\alpha\lambda_i\mathsf{k}| $, while when $\mathsf{k}>0.5$ we have $-\alpha\lambda_i\mathsf{k}< -|\alpha\lambda_i(1-\mathsf{k})|$. Therefore, the admissible delay ranges stated in the statement (a)
and the statement (b) follow, respectively, from the statements (a) and (b) of Lemma~\ref{thm::stability_con}.  To establish~\eqref{eq::consensus_delay_bounded}, we used $\bar{\tau}=\min\{\bar{\tau}_i\}_{i=2}^N$ where according to~\eqref{eq::Admis_range_tau} we have
\begin{align}\label{eq::tau_i}
    \bar{\tau}_i=\frac{\arccos(1-\frac{1}{\mathsf{k}})}{\alpha\lambda_i\sqrt{2\mathsf{k}-1}}.
\end{align}
In admissible delay bound, the time-delayed systems~\eqref{eq::laclacian_equivalent} for $i\in\{2,\cdots,N\}$ are exponentially stable, i.e.,  $z_i\rightarrow0$ as $t\to{\infty}$, $i\in\{2,\cdots,N\}$. As a result, $\lim_{t\to\infty} \ee^{\vect{S}_j^it}=0$, $i\in\mathbb{Z}_1^{N-1}$ and $j\in\mathbb{Z}$ can be certified from~\eqref{eq::Z_2N_sol} when the second term in the right-hand side is removed (zero-input response). Moreover since 
$\vect{z}(t)=\vect{T}^\top\vect{x}(t)$
(in zero-input dynamics), we then obtain that in the stated admissible delay ranges in the statements (a) and (b), $\vect{x}(t)$ converges exponentially fast to $\frac{1}{\sqrt{N}}z_1(0)\vect{1}_N=\frac{1}{\sqrt{N}}(\frac{1}{\sqrt{N}}\sum_{j=1}^N x^i(0))\vect{1}_N=\mathsf{x}^{\text{avg}}(0)$. 
This completes the proof. 
\end{proof}
The results of Lemma~\ref{lem::admis_delay_modified_con} includes the result in~\cite{HM-SSK:18}, which specifies the admissible range of delay for when $\mathsf{k}=1$, as special case.
Next, we study the ultimate tracking bound of the modified average consensus algorithm~\eqref{eq::consensus}. We show that for delays in the admissible delay bound the ultimate tracking error is still $\eps_0$ as defined in Lemma~\ref{lem::admis_delay_modified_con}. 

\begin{thm}[Convergence of~\eqref{eq::consensus} over  connected graphs when $\tau\in[0,\bar{\tau})$ ]\label{thm::Alg_delay}
{\rm Let $\GG$ be a connected undirected graph with communication delay in $\tau\in[0,\bar{\tau})$ where $\bar{\tau}$ is specified in Lemma~\ref{lem::admis_delay_modified_con}. Let $\|(\vect{I}_N-\frac{1}{N}\vect{1}_N\vect{1}_N^\top)\dvectsf{r}\|_{\infty}=\gamma<\infty$.  Then, for
  any $\alpha\in\realpositive$, the trajectories of algorithm~\eqref{eq::consensus} for any $\mathsf{k}\in\real$  are bounded and satisfy~\eqref{eq::Alg_D_ultimate_bound}.
  Moreover, The rate of convergence to this error neighborhood is no worse than~$\rho_\tau(\mathsf{k})$ defined in~\eqref{eq::rate_consensus_d1}.}
\end{thm}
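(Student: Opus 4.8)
The plan is to work in the decoupled coordinates $\vect z(t)=\vect T^\top(\vect x(t)-\mathsf r^{\text{avg}}(t)\vect 1_N)$ of~\eqref{eq::Def_trans}, in which~\eqref{eq::consensus} takes the form~\eqref{eq::LaclacianEquivalent}. From~\eqref{eq::laclacian_equivalent_a} and the initial condition, $z_1(t)\equiv 0$; hence by~\eqref{eq::tracking_error} the consensus error equals $\vect R\,\vect z_{2:N}(t)$, and since $\vect R$ has orthonormal columns, $\|\vect x(t)-\mathsf r^{\text{avg}}(t)\vect 1_N\|=\|\vect z_{2:N}(t)\|$, so that $|x^i(t)-\mathsf r^{\text{avg}}(t)|\le\|\vect z_{2:N}(t)\|$ for every $i\in\VV$. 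It therefore suffices to show that $\vect z_{2:N}$ is bounded and that its ultimate bound coincides with the one produced in the delay-free analysis of Theorem~\ref{thm::consensus-orig-conv}, from which~\eqref{eq::Alg_D_ultimate_bound} follows verbatim.

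I would then split the solution~\eqref{eq::Z_2N_sol} into its zero-input part $\sum_{j}\ee^{\vect S_j t}\vect C_j\vect z_{2:N}(0)$ and its forced part $\int_0^t\sum_j\ee^{\vect S_j(t-\zeta)}\vect C_j'\vect R^\top\dvectsf r(\zeta)\,\text{d}\zeta$. For the zero-input part, Lemma~\ref{lem::admis_delay_modified_con} guarantees that for $\tau\in[0,\bar\tau)$ each $\ee^{\vect S_j^i t}\to 0$; combined with Lemma~\ref{lem::LambertReal} (the principal branch $j=0$ has the largest real part, hence carries the slowest-decaying modes) this term decays exponentially at the rate $\rho_\tau(\mathsf k)$ of~\eqref{eq::rate_consensus_d1}. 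The summability of the infinite branch sum needed for these manipulations is the one already underlying the representation~\eqref{eq::Z_2N_sol}, coming from the exponential spacing of the Lambert branches.

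The crux is the forced part, where one must show that the delay does not enlarge the ultimate bound. Because~\eqref{eq::LaclacianEquivalent_b} is diagonal, the $i$-th mode obeys $\dot z_i(t)=-\alpha(1-\mathsf k)\lambda_i z_i(t)-\alpha\mathsf k\lambda_i z_i(t-\tau)+[\vect R^\top\dvectsf r(t)]_i$, with transfer function $H_i(s)=\big(s+\alpha(1-\mathsf k)\lambda_i+\alpha\mathsf k\lambda_i\ee^{-s\tau}\big)^{-1}$. The key observation is $H_i(0)=1/(\alpha\lambda_i)$: the two feedback weights sum to $\alpha\lambda_i$ irrespective of $\mathsf k$, so the delay is annihilated at $s=0$ and the disturbance-to-error dc gain equals the delay-free one, bounded by $1/(\alpha\lambda_2)=1/\rho_0$. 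To convert this into the claimed time-domain bound $\eps_0$ I would reproduce the estimate of Theorem~\ref{thm::consensus-orig-conv} on the augmented (delayed) system, most cleanly through a Lyapunov--Krasovskii functional $V(\vect z_{2:N,t})=\tfrac12\|\vect z_{2:N}(t)\|^2+\mu\!\int_{t-\tau}^{t}\!\|\vect z_{2:N}(s)\|^2\,\text{d}s$, with $\mu>0$ and the admissibility condition of Lemma~\ref{lem::admis_delay_modified_con} chosen so that the cross term created by $\vect z_{2:N}(t-\tau)$ is dominated and $\dot V\le-2\rho_0\big(\tfrac12\|\vect z_{2:N}(t)\|^2\big)+(\text{disturbance term})\cdot\|\vect z_{2:N}(t)\|$ along trajectories, which yields the same ultimate bound and the same boundedness conclusion (using essential boundedness of $\dvectsf r$) as in the delay-free case.

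The main obstacle is precisely this last conversion for \emph{all} admissible delays and \emph{all} $\mathsf k\in\real$: when $\mathsf k>1$ the instantaneous gain $\alpha(1-\mathsf k)\lambda_i$ is negative, so one cannot argue through nonnegativity or the $L_1$-norm of the convolution kernel $\sum_j\ee^{\vect S_j(\cdot)}\vect C_j'$, and the Lyapunov--Krasovskii derivative must be made to close with the \emph{sharp} coefficient $\rho_0=\alpha\lambda_2$ rather than a conservative one. This is where the identity $\alpha(1-\mathsf k)\lambda_i+\alpha\mathsf k\lambda_i=\alpha\lambda_i\ge\rho_0$ and the explicit delay bound $\bar\tau$ of Lemma~\ref{lem::admis_delay_modified_con} do the work. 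The rate claim then follows at once from~\eqref{eq::rate_consensus_d1} and Lemma~\ref{lem::LambertReal}.
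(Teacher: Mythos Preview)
Your setup in the $z$-coordinates, the $z_1\equiv 0$ observation, the zero-input/forced split, and especially the dc-gain identity $H_i(0)=1/(\alpha\lambda_i)$ all match the paper. The gap is the step where you try to convert that dc gain into the time-domain bound: you propose a Lyapunov--Krasovskii functional and then correctly flag that extracting the \emph{sharp} coefficient $\rho_0$ from it, uniformly in $\mathsf k\in\real$ and $\tau\in[0,\bar\tau)$, is the obstacle. It is a real one---for $\mathsf k>1$ the instantaneous coefficient has the wrong sign, and a quadratic functional of the form you write will not close with the exact constant $\rho_0$. The paper does not take this route.

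Instead the paper stays with the Lambert expansion~\eqref{eq::Z_2N_sol} all the way. After bounding the forced response and sending $t\to\infty$, the ultimate error is $\gamma\max_i\big|\sum_{j\in\mathbb Z}\vect C_j^i/\vect S_j^i\big|$, so the whole question reduces to evaluating this branch sum. The trick is to recognize it (up to sign) as $\int_0^\infty z_{i+1}(t)\,\text{d}t\big/z_{i+1}(0)$ for the \emph{zero-input} scalar mode, and then to integrate $\dot z_{i+1}=-\alpha(1-\mathsf k)\lambda_{i+1}z_{i+1}(t)-\alpha\mathsf k\lambda_{i+1}z_{i+1}(t-\tau)$ over $[0,\infty)$: the zero pre-history on $[-\tau,0)$ makes the delayed integral collapse into the current one, exponential stability kills $z_{i+1}(\infty)$, and one reads off $\int_0^\infty z_{i+1}=z_{i+1}(0)/(\alpha\lambda_{i+1})$. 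This is exactly your $H_i(0)=1/(\alpha\lambda_i)$ obtained on the time side, and it delivers the sharp $\eps_0=\gamma/\rho_0$ with no slack and no case analysis on $\mathsf k$. So the missing step is not a new ingredient---you already have the right invariant---but the realization that it can be cashed in directly against $\sum_j\vect C_j^i/\vect S_j^i$, bypassing any auxiliary functional.
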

\begin{proof}
To establish our proof we consider~\eqref{eq::LaclacianEquivalent}, the equivalent representation of algorithm~\eqref{eq::consensus}. Recall~\eqref{eq::laclacian_equivalent_a} which along with the given initial condition gives $z_1(t)=0$  
for $t\in\real_{\geq0}$. Also, given~\eqref{eq::Z_2N_sol}, the trajectories of $t\mapsto\vect{z}_{2:N}$ for $t\in\real_{\geq0}$ satisfy 
\begin{align}\label{eq::z2Nnormbound} 
&\!\|\vect{z}_{2:N}(t)\|\!\!\leq\!\left\|\sum_{j\in\mathbb{Z}}\!\Diag{\vect{C}^1_{j}\ee^{\vect{S}^1_jt},\cdots,\vect{C}^{N-1}_{j}\ee^{\vect{S}^{N-1}_jt}}\right\|\!\left\|\vect{z}_{2:N}(0)\right\|\nonumber\\
&\!
 \,\,\,+\gamma\left\|\sum_{j\in\mathbb{Z}}\Diag{\frac{\vect{C}^1_{j}}{\vect{S}^1_j}(1-\ee^{\vect{S}^1_jt}),\cdots,\frac{\vect{C}^{N-1}_{j}}{\vect{S}^{N-1}_j}(1-\ee^{\vect{S}^{N-1}_jt)}}\right\|=\nonumber\\
 &\,\,\max\Big\{\big|\!\sum_{j\in\mathbb{Z}}\!\vect{C}^i_{j}\ee^{\vect{S}^i_jt}\big|\Big\}_{i=1}^{N-1}\left\|\vect{z}_{2:N}(0)\right\|+\nonumber\\&\,\,\,\gamma\max\Big\{\big|\sum_{j\in\mathbb{Z}}\frac{\vect{C}^i_{j}}{\vect{S}^i_j}(1-\ee^{\vect{S}^i_jt})\big|\Big\}_{i=1}^{N-1}.
\end{align}
Here, we used  $\|\vect{R}^\top\dvectsf{r}\|\leq\gamma$. 
 Furthermore, using~\eqref{eq::Def_trans} we obtain $|x^i(t)-\mathsf{r}^\text{avg}(t)|\leq\|\vect{x}(t)-\mathsf{r}^\text{avg}(t)\vect{1}_N\|=\|\vect{z}(t)\|=\sqrt{|z_1(t)|^2+\|\vect{z}_{2:N}(t)\|^2}=\|\vect{z}_{2:N}(t)\|$.
 Then, it follows from~\eqref{eq::z2Nnormbound} that $\lim_{t\to\infty}| x^i(t)-\mathsf{r}^{\text{avg}}(t) |\leq\lim_{t\rightarrow\infty}\|\vect{z}_{2:N}(t)\|=\gamma\max\{|\sum_{j\in\mathbb{Z}}\frac{\vect{C}^i_{j}}{\vect{S}^i_j}|\}_{i=1}^{N-1}$. Next, we show that $\sum_{j\in\mathbb{Z}}\frac{\vect{C}^i_{j}}{\vect{S}^i_j}=\frac{1}{\alpha\lambda_{i+1}}$ for any $i\in\mathbb{Z}_1^{N-1}$.
 To this end, note that from zero-input response of~\eqref{eq::Z_2N_sol} we have ${z}_{i}(t)\!=\!(\sum_{j\in\mathbb{Z}}\!\ee^{\vect{S}^i_jt}\vect{C}^i_j)\,{z}_{i}(0)$ which gives ($\sum_{j\in\mathbb{Z}}\frac{\vect{C}^i_{j}}{\vect{S}^i_j}){z}_{i}(0)=\int_{0}^{\infty}{z}_{i}(t)\text{d}t$.  On the other hand using~\eqref{eq::laclacian_equivalent} for any $i\in\mathbb{Z}_1^{N-1}$ we have
 \begin{align*}
    \int_0^\infty\!\!\!\dot{z}_{i+1}(t)\text{d}t&=\!-\alpha\lambda_{i+1} \int_0^\infty\!\!\!\! z_{i+1}(t)\text{d}t-\alpha\lambda_{i+1} \mathsf{k}\!\int_{-\tau}^0\!\!\! z_{i+1}(t)\text{d}t.
 \end{align*}
Recalling~\eqref{eq::LaclacianEquivalent_bb}, we get $\int_{-\tau}^0 z_{i+1}(t)\text{d}t=0$, which along with the fact that under admissible range $\lim_{t\rightarrow\infty}z_{i+1}(t)=0$, implies $\int_0^\infty z_{i+1}(t)\text{d}t=\frac{z_{i+1}(0)}{\alpha\lambda_i}$ which holds for any initial condition $z_{i+1}(0)\in\real$. Therefore, we get $\sum_{j\in\mathbb{Z}}\frac{\vect{C}^i_{j}}{\vect{S}^i_j}=\frac{1}{\alpha\lambda_{i+1}}$ and consequently $\lim_{t\to\infty}| x^i(t)-\mathsf{r}^{\text{avg}}(t) |\leq \frac{\gamma}{\alpha\min\{\lambda_i\}_{i=2}^N}\leq\frac{\gamma}{\rho_0}$.
Moreover, the maximum rate of convergence corresponds to the worst rate of the exponential terms in~\eqref{eq::z2Nnormbound}, or equivalently  ${\min}\{\{-\re{\vect{S}^i_j}\}_{i=1}^{N-1}\}_{j\in\mathbb{Z}}$  given in~\eqref{eq::rate_consensus_d1}.
\end{proof}

So far we have shown that splitting the immediate disagreement feedback of~\eqref{eq::consensus-orig} into current and outdated components as in~\eqref{eq::consensus} does not have adverse effect on the tracking performance. Next, we show that this action interestingly can lead to increase in the rate of the converge at some specific values of $\mathsf{k}$ and $\tau$. As we noted earlier, the rate of convergence of~\eqref{eq::consensus} is determined by behavior of its transient response that is governed by its zero-input dynamics. Consequently, we study the  stability of the zero-input dynamics of the modified average consensus algorithm~\eqref{eq::consensus} and examine how its exponential rate of convergence  to the average of its initial condition at time $t=0$ changes due to delay at various values of $\mathsf{k}\in\real/\{0\}$. For any given value of $\mathsf{k}$ and $\tau$, in what follows, we let $\rho_{\tau}(\mathsf{k})$ be the rate of convergence of~\eqref{eq::consensus} and  $\vect{u}_{\tau,\mathsf{k}}(t)=-\alpha\,(1-\mathsf{k})\,\vect{L}\,\vect{x}(t)-\alpha\,\mathsf{k}\,\vect{L}\,\vect{x}(t-\tau)$ be the control effort to steer the zero-input dynamics of~\eqref{eq::consensus}.
Specifically, we show that for all $\mathsf{k}\in\realpositive$, there always exists a range of delay $(0,\tilde{\tau}_{\mathsf{k}})$ such that $\rho_{\tau}(\mathsf{k})>\rho_{0}(0)=\rho_0=\alpha\lambda_2$  for any $\tau\in(0,\tilde{\tau}_\mathsf{k})$. 
We show however that only for $\mathsf{k}\in(0,1]$ we can guarantee  $|\vect{u}_{{\tau},\mathsf{k}}|_{\infty}\leq |\vect{u}_{0,0}|_{\infty}$, for $\tau\in(0,\tilde{\tau}_{\mathsf{k}})$. 
In what follows, we also investigate what the maximum value of $\rho_{\tau}(\mathsf{k})$ and the corresponding maximizer $\tau^\star_{\mathsf{k}}\in(0,\tilde{\tau}_\mathsf{k})$ are for a given $\mathsf{k}\in\realpositive$. 

We start our analysis, by defining the \emph{delay gain function} 
\begin{align}\label{eq::gain_rate}
\!\!\!g(\gamma,\mathsf{x})=\begin{cases}
\frac{1}{\mathsf{x}}\re{W_0(\mathsf{x}\,\ee^{\gamma\,\mathsf{x}})},\quad\quad \mathsf{x}\in\real\backslash\{0\},\quad\quad \\ 1,  \quad\quad\quad\quad\quad\quad\quad\quad\quad\quad \mathsf{x}=0, \end{cases}
\end{align}
with $\mathsf{x},\gamma\in\real$, to write $\rho_\tau(\mathsf{k})$ in~\eqref{eq::rate_consensus_d1}  as
\begin{align}\label{eq::rate_consensus_d}
    &\rho_{\tau}(\mathsf{k})={\min}\{\rho_{\tau,i}(\mathsf{k})\}_{i=2}^N, \\ &\rho_{\tau,i}(\mathsf{k})=(\mathsf{k} g(1-\frac{1}{\mathsf{k}},-\mathsf{k}\lambda_i\alpha\tau)\!+\!(1-\mathsf{k}))\,\alpha\lambda_i.\nonumber
\end{align}
It follows from~\eqref{eq::limWz-z}~that $
\lim_{\mathsf{x}\to0}g(\gamma,\mathsf{x})=1$. Therefore, as expected, $\lim_{\tau\to0}\rho_\tau(\mathsf{k})=\rho_0=\alpha\lambda_2$. 
We note that in fact $\rho_{\tau,i}$, $i\in\{2,\cdots,N\}$, defines the rate of convergence of $z_i$ in~\eqref{eq::laclacian_equivalent}. In what follows, when emphasis on $\mathsf{k}$ is not necessary, to simplify the notation we write $\rho_{\tau}(\mathsf{k})$ as $\rho_{\tau}$.
 
 In Appendix~\ref{sec::Prelim}, we study the variation of delay gain function versus $\mathsf{x}\in\real_{\geq0}$ for given values of $\gamma$.
We show that for some specific values of ${\gamma}$ there always exists a subset of the admissible delay range that the delay gain $g(\gamma,\alpha\tau)$ is greater than $1$.
In what follows, we use these results to determine ranges of delay and $\mathsf{k}$ we have $\rho_\tau(\mathsf{k})>\alpha\lambda_2$. We also identify the optimum value of the delay $\tau^\star$ for which $\rho_{\tau}$ has its maximum value, i.e., we identify the solution for 
\begin{align}\label{eq::ro_min_max}
 \tau^\star=\underset{\tau\in(0,\bar{\tau})}{\argmax}\,\rho_{\tau}=\underset{\tau\in(0,\bar{\tau})}{\argmax}\,\min \{\rho_{\tau,i}\}_{i=2}^N.
\end{align}
As we showed in Appendix~\ref{sec::Prelim}, the variation  of the delay gain function $g$  with $\mathsf{x}\in\real_{\geq0}$ for given values of $\gamma$ is not monotone. Therefore, the solution to~\eqref{eq::ro_min_max} is not trivial. Our careful characterization of variation of $g$ vs. $\mathsf{x}\in\real_{\geq0}$ in Appendix~\ref{sec::Prelim} however, let us achieve our goal. 

In what follows, we set
\begin{align*}
 &\tau_i^\star=\underset{\tau\in(0,\bar{\tau}_i)}{\argmax}\,\rho_{\tau,i},\\
 &\tilde{\tau}_i=\{\tau\in(0,\bar{\tau}_i)\,|\,g(1-\frac{1}{\mathsf{k}},-\mathsf{k}\alpha\lambda_i\tau))\!=\!1\}.
\end{align*}
where $\bar{\tau}_i$ is given in~\eqref{eq::tau_i}. With the notation defined, the next theorem examines the effect of outdated feedback on the rate of convergence of modified consensus algorithm~\eqref{eq::consensus} for different value of $\mathsf{k}\in\real/\{0\}$.
\begin{thm}[Effect of outdated feedback on the rate of convergence of average consensus algorithm~\eqref{eq::consensus}]\label{thm::consensus_prob}
{\rm The following assertions hold for the modified average consensus dynamics~\eqref{eq::consensus} over a connected graph whose rate of convergence is specified in~\eqref{eq::rate_consensus_d}:
\begin{itemize}
    \item[(a)] For $\mathsf{k}<0$ the rate of convergence of the consensus algorithm~\eqref{eq::consensus} decreases by increasing $\tau\in\real_{\geq0}$. 
\item[(b)] For $\mathsf{k}>0$, $\rho_\tau> \rho_0$ if and only if  $\tau\in[0,\hat{\tau})\subset[0,\bar{\tau})$ where $\hat{\tau}=\min\{\hat{\tau}_i\}_{i=2}^{N}$ with $\hat{\tau}_i=\{\tau\in\realpositive|\rho_{\tau,i}=\rho_0\}$ and satisfies $\tilde{\tau}_N\leq \hat{\tau}\leq \min\{\tilde{\tau}_2,\bar{\tau}\}$. 
      Moreover,
    the optimum time delay $\tau^{\star}$ corresponding to the maximum rate of convergence of the consensus algorithm~\eqref{eq::consensus} satisfies
 $\tau^{\star}\in[\tau^\star_N,\min\{\tau^\star_2,\hat{\tau}\}]$, where $\tau_N^\star=\frac{1}{\alpha(1-\mathsf{k})\lambda_N}W_0(\frac{1-\mathsf{k}}{\mathsf{k}\ee})$ and $\tau_2^\star=\frac{1}{\alpha(1-\mathsf{k})\lambda_2}W_0(\frac{1-\mathsf{k}}{\mathsf{k}\ee})$, and is given by $\tau^\star=\{\tau\in[\tau^\star_N,\min\{\tau^\star_2,\hat{\tau}\}]\,|\, \rho_{\tau,2}=\min\{\rho_{\tau,i}\}_{i=3}^N\}$.
 \end{itemize}}
\end{thm}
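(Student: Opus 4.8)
The plan is to reduce every claim to the scalar rates $\rho_{\tau,i}(\mathsf{k})=\alpha\lambda_i\big(\mathsf{k}\,g(1-\tfrac1{\mathsf{k}},-\mathsf{k}\lambda_i\alpha\tau)+1-\mathsf{k}\big)$ of~\eqref{eq::rate_consensus_d} and feed in the shape of the delay gain function $g$ developed in Appendix~\ref{sec::Prelim}; recall $\rho_\tau=\min_{i=2}^N\rho_{\tau,i}$ and $\rho_0=\alpha\lambda_2$, and set $\gamma=1-\tfrac1{\mathsf{k}}$. For~(a), when $\mathsf{k}<0$ we have $\gamma>1$ and the argument $-\mathsf{k}\lambda_i\alpha\tau$ of $g$ is positive, so $W_0$ is real there (by~\eqref{eq::W_0_facts_a}) and $g(\gamma,\cdot)$ is strictly increasing on $\realpositive$; since the coefficient $\mathsf{k}\alpha\lambda_i$ multiplying $g$ in $\rho_{\tau,i}$ is negative, each $\rho_{\tau,i}$ — hence $\rho_\tau$ — is strictly decreasing in $\tau$, while $\bar\tau=\infty$ by Lemma~\ref{lem::admis_delay_modified_con}(a). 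This proves~(a).

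For~(b) I would first fix the shape of $\rho_{\tau,i}$ when $\mathsf{k}>0$ (so $\gamma<1$ and the $g$-argument is nonpositive): from the Appendix, $\tau\mapsto g(\gamma,-\mathsf{k}\lambda_i\alpha\tau)$ equals $1$ at $\tau=0$, strictly increases to a maximum at the $\tau$ solving $-\mathsf{k}\lambda_i\alpha\tau\,\ee^{-\gamma\mathsf{k}\lambda_i\alpha\tau}=-1/\ee$, then strictly decreases, returning to $1$ at $\tilde\tau_i$ and continuing to decrease up to $\bar\tau_i$. Multiplying the peak equation by $\tfrac{\mathsf{k}-1}{\mathsf{k}}$ and applying $W_0$ (branch $0$, since the argument lies in $(-1/\ee,\infty)$) identifies the peak as $\tau_i^\star=\tfrac1{\alpha(1-\mathsf{k})\lambda_i}W_0\!\big(\tfrac{1-\mathsf{k}}{\mathsf{k}\ee}\big)$, matching the statement. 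Hence $\rho_{\tau,i}=\alpha\lambda_i\big(\mathsf{k}(g-1)+1\big)$ is unimodal with peak at $\tau_i^\star$; $\rho_{\tau,i}=\rho_0$ forces $g=1-\tfrac{\lambda_i-\lambda_2}{\mathsf{k}\lambda_i}$, which equals $1$ when $i=2$ (so $\hat\tau_2=\tilde\tau_2$) and is $<1$ when $i\ge3$, hit at a unique $\hat\tau_i>\tilde\tau_i$ on the decreasing branch. Since $\rho_{\tau,i}$ starts at $\alpha\lambda_i\ge\rho_0$ and is unimodal, $\rho_{\tau,i}>\rho_0$ exactly below $\hat\tau_i$; intersecting over $i$ gives $\rho_\tau>\rho_0$ iff $\tau\in(0,\hat\tau)$ with $\hat\tau=\min_i\hat\tau_i$. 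The sandwich $\tilde\tau_N\le\hat\tau\le\min\{\tilde\tau_2,\bar\tau\}$ then follows from three comparisons: $\hat\tau\le\hat\tau_2=\tilde\tau_2$; $\rho_{\tau,N}\downarrow0$ as $\tau\uparrow\bar\tau_N$ gives $\hat\tau\le\hat\tau_N<\bar\tau_N=\bar\tau$ (vacuous when $\mathsf{k}\le0.5$); and, because for each $i\ge2$ the level of $g$ reached at $\hat\tau_i$ is $\le1$ and is attained on the decreasing branch no earlier than where $g$ returns to $1$, we get $\mathsf{k}\lambda_i\alpha\hat\tau_i\ge\mathsf{k}\lambda_N\alpha\tilde\tau_N$, i.e. $\hat\tau_i\ge(\lambda_N/\lambda_i)\tilde\tau_N\ge\tilde\tau_N$.

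For the maximizer, the maximum of $\rho_\tau$ is sought over $(0,\hat\tau)$, since outside this set $\rho_\tau\le\rho_0$ while inside it $\rho_\tau>\rho_0$. The workhorse is the ordering: for $\tau\le\tau_N^\star$ the $g$-argument of mode $i$ has magnitude at most $\mathsf{k}\lambda_N\alpha\tau_N^\star$, i.e. sits on the increasing part of $g(\gamma,\cdot)$, so $g(\gamma,-\mathsf{k}\lambda_i\alpha\tau)\ge g(\gamma,-\mathsf{k}\lambda_2\alpha\tau)\ge1$ and, with $\mathsf{k}>0$ and $\lambda_i\ge\lambda_2$, $\rho_{\tau,i}\ge\rho_{\tau,2}$; hence $\rho_\tau=\rho_{\tau,2}$ is strictly increasing on $[0,\tau_N^\star]$ and $\tau^\star\ge\tau_N^\star$. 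Dually, for $\tau\ge\tau_2^\star$ every mode is past its peak $\tau_i^\star\le\tau_2^\star$, so every $\rho_{\tau,i}$ and thus $\rho_\tau$ is decreasing, and $\tau^\star\le\tau_2^\star$; combined with $\tau^\star<\hat\tau$, $\tau_N^\star\le\tau_2^\star$, and $\tau_N^\star<\hat\tau$ (because $\rho_{\tau_N^\star}=\rho_{\tau_N^\star,2}>\rho_0$), this places $\tau^\star$ in the nonempty interval $[\tau_N^\star,\min\{\tau_2^\star,\hat\tau\}]$. On this interval $\rho_{\tau,2}$ is strictly increasing and, by the ordering above, $\rho_{\tau,2}\le\min_{i\ge3}\rho_{\tau,i}$ at the left endpoint; the closing step is a minimax/balance argument: $\rho_{\tau,2}$ is the active minimum up to the point $\tau^\star$ at which it meets the envelope $\min_{i=3}^N\rho_{\tau,i}$, beyond which that envelope is the active minimum, so $\tau^\star$ is the maximizer and is the point singled out in the theorem.

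I expect the minimax step to be the main obstacle. Making it rigorous asks one to show that on $[\tau_N^\star,\min\{\tau_2^\star,\hat\tau\}]$ the envelope $\min_{i\ge3}\rho_{\tau,i}$ is nonincreasing wherever it is binding — equivalently, that an intermediate mode $i\in\{3,\dots,N-1\}$ cannot be binding while still rising — so that the increasing curve $\rho_{\tau,2}$ crosses the envelope exactly once and at the argmax; this leans on the unimodality of each $\rho_{\tau,i}$ in $\tau$ and on the ordering $\tau_N^\star\le\tau_i^\star\le\tau_2^\star$ of the peaks, and one must separately dispatch the degenerate case in which $\rho_{\tau,2}$ stays binding throughout (then $\tau^\star=\min\{\tau_2^\star,\hat\tau\}$). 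All remaining steps are routine once the Appendix properties of $g$ — its value $1$ at the origin, its unimodal profile, the location of its peak, and its unique interior crossing of level $1$ — are available.
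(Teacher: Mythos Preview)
Your proposal is correct and follows essentially the same approach as the paper: reduce to the scalar rates $\rho_{\tau,i}$, feed in the unimodal shape of the delay gain function $g$ from Appendix~\ref{sec::Prelim}, and combine via the orderings $\tau_N^\star\le\cdots\le\tau_2^\star$, $\tilde\tau_N\le\cdots\le\tilde\tau_2$, $\tilde\tau_i\le\hat\tau_i<\bar\tau_i$ to sandwich $\hat\tau$ and localize $\tau^\star$. The only packaging difference is that the paper routes the scalar facts through Theorem~\ref{thm::delay_effect_scalar} rather than citing Lemma~\ref{thm::g(k,x)_prop} directly. For the minimax step you flag as the obstacle, the paper does exactly what you anticipate but organizes it via the index $\mathsf{j}=\min\{i:\tau_i^\star\le\tau^\star\}$: at $\tau^\star$ the modes $i<\mathsf{j}$ are still rising and the modes $i\ge\mathsf{j}$ are past their peaks, so the argmax of $\rho_\tau=\min_i\rho_{\tau,i}$ is where the rising-group minimum meets the falling-group minimum; your own observation that on the rising part larger $\lambda_i$ gives larger $g$ and hence larger $\rho_{\tau,i}$ then identifies the rising-group minimum as $\rho_{\tau,2}$, yielding $\rho_{\tau,2}=\min_{i\ge\mathsf{j}}\rho_{\tau,i}=\min_{i\ge3}\rho_{\tau,i}$.
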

\begin{proof}
Recall that the rate of convergence of algorithm~\eqref{eq::consensus} is specified by~\eqref{eq::rate_consensus_d} (equivalent representation of~\eqref{eq::rate_consensus_d1}), which is the minimum of the rate of convergence of $z_i$, $i\in\{2,\cdots,N\}$ dynamics given in~\eqref{eq::laclacian_equivalent}.
Then, the proof of part (a) follows directly from statement (a) of Theorem~\ref{thm::delay_effect_scalar}, which states that the rate of convergence of each $z_i$, $i\in\{2,\cdots,N\}$ dynamics 
 decreases by increasing delay $\tau\in\real_{>0}$ (note that in Theorem~  \ref{thm::delay_effect_scalar} each $z_i$ dynamics reads as $\mathsf{a}=-\alpha\mathsf{k}\lambda_i>0$
 and $\mathsf{b}=-\alpha(1-\mathsf{k})\lambda_i<0$). 
 To prove statement~(b) we proceed as follows. For $\mathsf{k}>0$, because of the statement (b) of Theorem~  \ref{thm::delay_effect_scalar} for each $z_i$, $i\!\in\!\{2,\cdots,N\}$, dynamics  ($\mathsf{a}\!=\!-\alpha\mathsf{k}\lambda_i\!<\!0$) we have the guarantees that 
 \begin{align*} \rho_{\tau,i}\!=\!(\mathsf{k} g(1-\frac{1}{\mathsf{k}},-\mathsf{k}\lambda_i\alpha\tau)\!+\!(1-\mathsf{k}))\,\alpha\lambda_i>\rho_{0,i}\geq \rho_0,
 \end{align*}
for $\tau\in(0,\tilde{\tau}_i)$. Since $\alpha>0$, $\lambda_N\geq \lambda_{N-1}\geq \cdots\geq \lambda_2>0$ and $\rho_{0,N}\geq\rho_{0,N-1}\geq \cdots\geq\rho_{0,2}$, we have 
 \begin{subequations}
 \begin{align}
    & \tau^\star_i<\tilde{\tau}_i\leq \hat{\tau}_i<\bar{\tau}_i,\quad i\in\{3,\cdots,N\},\label{eq::hat_i_bound}\\
    &\tilde{\tau}_N\leq\tilde{\tau}_{N-1}\leq \cdots\leq \tilde{\tau}_2,\label{eq::tilde_i_bound}\\
    &\bar{\tau}=\bar{\tau}_N\leq\bar{\tau}_{N-1}\leq \cdots\leq \bar{\tau}_2,\label{eq::bar_i_bound}\\
    &{\tau}_N^{\star}\leq{\tau}_{N-1}^{\star}\leq \cdots\leq {\tau}_2^{\star}.\label{eq::star_i_bound}
 \end{align}
 \end{subequations}
 and $\hat{\tau}_2=\tilde{\tau}_2$. 
 Since $g(1-\frac{1}{\mathsf{k}},-\mathsf{k}\lambda_i\alpha\tau)$ is a decreasing function of $\tau$ for any $\tau\in(\tilde{\tau}_i,\bar{\tau}_i)\subset(\tau_i^\star,\bar{\tau}_i)$ (Recall Lemma~  \ref{thm::g(k,x)_prop}), it follows that for any $\tau\in[0,\hat{\tau}_j)$ we have $\rho_{\tau,j}>\rho_0$ and for any $\tau\in[\hat{\tau}_j,\bar{\tau})$ we have $\rho_{\tau,j}<\rho_0$. 
 Because $\rho_\tau=\min\{\rho_{\tau,j}\}_{j=2}^N$, we have $\rho_\tau>\rho_0$, if and only if $\tau\in(0,\hat{\tau})$ where $\hat{\tau}=\min\{\hat{\tau}_j\}_{j=2}^N$. From~\eqref{eq::hat_i_bound} and \eqref{eq::tilde_i_bound}, it follows that $\tilde{\tau}_N\leq \hat{\tau}$. Moreover, since $\rho_{\tau,2}>\rho_0$ for $\tau\in(0,\tilde{\tau}_2)$, we obtain $\hat{\tau}\leq \min\{\tilde{\tau}_2,\bar{\tau}\}$. 
 This concludes the proof of the first part of statement (b).
 
 To obtain $\tau^\star\in(0,\hat{\tau})$ which gives the maximum attainable $\rho^\star_\tau$ we proceed as follows.  First, note that statement (b) of Theorem~  \ref{thm::delay_effect_scalar} indicates that $\rho_{\tau,i}$, $i\in\{2,\cdots,N\}$ is a monotonically increasing (resp. decreasing) function of  $\tau\in(0,\tau^\star_i)$ (resp. $\tau\in(\tau^\star_i,\bar{\tau}_i)$). Then because of~\eqref{eq::star_i_bound}, we have the guarantees that $\rho_\tau$  is a monotonically increasing function of $\tau\in(0,\tau^\star_N)$, and decreasing function of $\tau$ for any  $\tau>\tau^\star_2$. Therefore, the maximum value of $\rho_\tau$ should be attained at $\tau^\star\in([\tau^\star_N,\tau^\star_2]\cap(0,\hat{\tau}))\subseteq[\tau^\star_N,\min\{\tau^\star_2,\hat{\tau}\}]$ with $\tau^\star_2=\frac{1}{\lambda_2}W_0(\frac{1-\mathsf{k}}{\mathsf{k}\ee})$ and $\tau^\star_N=\frac{1}{\lambda_N}W_0(\frac{1-\mathsf{k}}{\mathsf{k}\ee})$. Now let $\mathsf{j}=\min\{i\in\{2,\cdots,N\}|\tau^\star_i\leq\tau^\star\}$. Then, given~\eqref{eq::star_i_bound}, for any $i\in\{2,\cdots,N\}$ such that $i< \mathsf{j}$ (resp. $i\geq \mathsf{j}$) by virtue of statement (e) of Lemma~  \ref{thm::g(k,x)_prop}  we know $\text{d} g(1-\frac{1}{\mathsf{k}},-\mathsf{k}\lambda_i\alpha\tau)/\text{d}\tau>0 $ (resp. $<0$) and consequently $\text{d}\rho_{\tau,i}/\text{d}\tau>0$ (resp. $<0$) at $\tau=\tau^\star$. Since $\rho_\tau=\min\{\rho_{\tau,i}\}_{i=2}^N$, the maximum value of $\rho_\tau$ is attained at $\tau=\tau^\star$ at which
\begin{align}\label{eq::temp1}
\min\{\rho_{\tau,i}\}_{i=\mathsf{j}}^N=\min\{\rho_{\tau,i}\}_{i=2}^{\mathsf{j}-1}.
\end{align}
Since $\lambda_2\tau^\star\leq\cdots\leq\lambda_{\mathsf{j}-1}\tau^\star$ and $\text{d} g(1-\frac{1}{\mathsf{k}},-\mathsf{k}\lambda_i\alpha\tau^\star)/\text{d}\tau>0$ for  $i\in\{2,\cdots,\mathsf{j}-1\}$, we have $g(1-\frac{1}{\mathsf{k}},-\mathsf{k}\lambda_{\mathsf{j}-1}\alpha\tau^\star)\geq g(1-\frac{1}{\mathsf{k}},-\mathsf{k}\lambda_{\mathsf{j}-2}\alpha\tau^\star)\geq \cdots\geq g(1-\frac{1}{\mathsf{k}},-\mathsf{k}\lambda_2\alpha\tau^\star)$. As a result, it follows from~\eqref{eq::rate_consensus_d} that at $\tau=\tau^\star$ we have $\min\{\rho_{\tau,i}\}_{i=2}^{\mathsf{j}-1}=\rho_{\tau,2}$, which given~\eqref{eq::temp1} completes our proof.
\end{proof}

Theorem~\ref{thm::consensus_prob} indicates that for any $\mathsf{k}>0$ there always exists a range of delay in $(0,\bar{\tau}]$ for which faster response can be achieved for the modified average consensus algorithm~\eqref{eq::consensus} relative to the original one~\eqref{eq::consensus-orig}. Next, our goal is to identify values of  $\mathsf{k}\in\realpositive$  for which the maximum driving effort $\vect{u}_{\tau,\mathsf{k}}(t)$  does not exceed the one for the original algorithm~\eqref{eq::consensus-orig} (for zero-input dynamics). However, before that we make the following statement about the maximum attainable rate by using outdated feedback.

\begin{lem}[Ultimate bound on the maximum attainable increase in the rate of convergence of~\eqref{eq::consensus}]\label{lem::ultimate_bound} {\rm For any  $\mathsf{k}\in\real_{\geq0}$, the ultimate bound on the maximum attainable rate of convergence for~\eqref{eq::consensus} by using outdated feedback is equal to
$
 (1-\mathsf{k})(1+\frac{1}{W_0(\frac{1-\mathsf{k}}{\mathsf{k}\ee})})\rho_0
$.}
\end{lem}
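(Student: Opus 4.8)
The plan is to bound the attainable rate from above by the per-mode rate associated with the slowest eigenvalue $\lambda_2$, and then evaluate that scalar rate in closed form with the single Lambert-$W$ identity $W_0(-1/\ee)=-1$. Concretely, by the expression $\rho_\tau(\mathsf{k})={\min}\{\rho_{\tau,i}(\mathsf{k})\}_{i=2}^N$ in \eqref{eq::rate_consensus_d} we have $\rho_\tau(\mathsf{k})\le\rho_{\tau,2}(\mathsf{k})$ for every admissible $\tau$, so the largest rate obtainable by tuning $\tau$ never exceeds $\sup_{\tau}\rho_{\tau,2}(\mathsf{k})$. This is exactly the quantity deserving the name ``ultimate bound'': in the spectrally degenerate configuration $\lambda_2=\cdots=\lambda_N$ one has $\rho_\tau(\mathsf{k})=\rho_{\tau,2}(\mathsf{k})$ identically, so the constant $\sup_\tau\rho_{\tau,2}(\mathsf{k})$ cannot be improved by any eigenvalue-independent argument.

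Next I would invoke Theorem~\ref{thm::consensus_prob} (and the $\argmax$ identification in its proof, where $\rho_{\tau,i}$ is shown to increase on $(0,\tau_i^\star)$ and decrease on $(\tau_i^\star,\bar{\tau}_i)$) to conclude that $\rho_{\tau,2}(\mathsf{k})$ attains its maximum over $\tau$ at $\tau=\tau_2^\star=\frac{1}{\alpha(1-\mathsf{k})\lambda_2}W_0\!\big(\frac{1-\mathsf{k}}{\mathsf{k}\ee}\big)$; for $\mathsf{k}=0$ the rate $\rho_{\tau,2}$ is the constant $\alpha\lambda_2$ and the claim is trivial. It then remains to substitute $\tau_2^\star$ into $\rho_{\tau,2}(\mathsf{k})=(\mathsf{k}\,g(1-\tfrac{1}{\mathsf{k}},-\mathsf{k}\lambda_2\alpha\tau)+(1-\mathsf{k}))\,\alpha\lambda_2$, using \eqref{eq::rate_consensus_d} and the definition \eqref{eq::gain_rate} of $g$. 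Writing $w:=W_0\!\big(\frac{1-\mathsf{k}}{\mathsf{k}\ee}\big)$, so that $w\,\ee^{w}=\frac{1-\mathsf{k}}{\mathsf{k}\ee}$ by the defining property of $W_0$, the argument of $g$ at $\tau_2^\star$ is $\mathsf{x}=-\mathsf{k}\lambda_2\alpha\tau_2^\star=-\frac{\mathsf{k}}{1-\mathsf{k}}\,w$, and with $\gamma=1-\frac{1}{\mathsf{k}}$ a direct check gives $\gamma\mathsf{x}=w$. Hence $\mathsf{x}\,\ee^{\gamma\mathsf{x}}=-\frac{\mathsf{k}}{1-\mathsf{k}}\,w\,\ee^{w}=-\frac{\mathsf{k}}{1-\mathsf{k}}\cdot\frac{1-\mathsf{k}}{\mathsf{k}\ee}=-\frac{1}{\ee}$, so $W_0(\mathsf{x}\,\ee^{\gamma\mathsf{x}})=W_0(-1/\ee)=-1$ and $g(\gamma,\mathsf{x})=-1/\mathsf{x}=\frac{1-\mathsf{k}}{\mathsf{k}\,w}$. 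Plugging back, $\rho_{\tau_2^\star,2}(\mathsf{k})=\big(\mathsf{k}\cdot\frac{1-\mathsf{k}}{\mathsf{k}\,w}+(1-\mathsf{k})\big)\alpha\lambda_2=(1-\mathsf{k})\big(1+\frac{1}{w}\big)\rho_0$, since $\rho_0=\alpha\lambda_2$, which is the asserted value.

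I do not expect a genuine obstacle here; the computation is essentially forced once the $\argmax$ from Theorem~\ref{thm::consensus_prob} is in hand, and the conceptually substantive step is merely the first one — recognizing that the ultimate bound is controlled solely by $\lambda_2$. The only points requiring a word of care are: well-definedness of $w$, i.e. $\frac{1-\mathsf{k}}{\mathsf{k}\ee}\ge-\frac{1}{\ee}$ for every $\mathsf{k}>0$ (immediate from $\frac{1-\mathsf{k}}{\mathsf{k}}\ge-1$), which places $w\in(-1,\infty)$ and keeps $g$ real-valued as used, being real-positive for $\mathsf{k}\in(0,1)$ and in $(-1,0)$ for $\mathsf{k}>1$; and the boundary cases $\mathsf{k}=0$ and $\mathsf{k}=1$, which are read as limits, giving $\rho_0$ (no acceleration) and, via $\lim_{z\to0}W_0(z)/z=1$ from \eqref{eq::limWz-z}, $\lim_{\mathsf{k}\to1}(1-\mathsf{k})/w=\ee$, hence $\ee\,\rho_0$, consistent with the known $\mathsf{k}=1$ result.
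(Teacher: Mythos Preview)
Your proof is correct and follows essentially the same route as the paper: bound $\rho_\tau\le\rho_{\tau,2}$, use that $\rho_{\tau,2}$ is maximized at $\tau_2^\star$, and evaluate. The only cosmetic difference is that the paper packages the evaluation step as an appeal to part~(f) of Lemma~\ref{thm::g(k,x)_prop} (which states $g(\gamma,\mathsf{x}^\star)=-\gamma/W_0(-\gamma/\ee)$), whereas you unroll that computation inline via the identity $W_0(-1/\ee)=-1$; the underlying arithmetic is identical.
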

\begin{proof}
It follows from part~(f) of Lemma~\ref{thm::g(k,x)_prop} 
that $g(1-\frac{1}{\mathsf{k}},-\mathsf{k}\lambda_i\alpha\tau_i^\star)=\frac{1-\mathsf{k}}{\mathsf{k}W_0(\frac{1-\mathsf{k}}{\mathsf{k}\ee})}$ for any $i\in\{2,\cdots,N\}$. Then, given~\eqref{eq::rate_consensus_d} we have $\rho_{\tau}\leq\rho_{\tau,2}\leq\rho_{\tau^\star,2}=\big(\mathsf{k}g(1-\frac{1}{\mathsf{k}},-\mathsf{k}\lambda_2\alpha\tau_2^\star)+(1-\mathsf{k})\big)\alpha\lambda_2=(1-\mathsf{k})(1+\frac{1}{W_0(\frac{1-\mathsf{k}}{\mathsf{k}\ee})})\rho_0$, which concludes our proof.
\end{proof}

Next, we study how the maximum control effort of the agents while implementing for the modified algorithm~\eqref{eq::consensus} compares to that of the original average consensus algorithm~\eqref{eq::consensus-orig} any $\mathsf{k}\in\realpositive$.
The theorem  below indicates that for any $\mathsf{k}\in(0,1]$ using the outdated feedback does not increase the  maximum control effort while for $\mathsf{k}>1$ the maximum control effort is greater than the one of the original algorithm~\eqref{eq::consensus-orig}.

\begin{thm}[The maximum control effort for steering the zero-input dynamics of the algorithm~\eqref{eq::consensus}]\label{thm::max_con_eff} 
{\rm For a given $\alpha\in\real_{>0}$, let $\vect{u}_{0,0}$, and  $\vect{u}_{\tau,\mathsf{k}}(t)$ be respectively the network aggregated control input of the zero-input dynamics of~\eqref{eq::consensus-orig}, and~\eqref{eq::consensus} for any $\mathsf{k}\in\realpositive$ and $\tau\in\realpositive$. Then, for any $\tau\in[0,\bar{\tau}]$, where admissible delay bound $\bar{\tau}$ is given in Lemma~\ref{lem::admis_delay_modified_con}, the following assertions hold for $t\in\real_{\geq0}$:
\begin{itemize}
 \item[ (a) ] For $\mathsf{k}\in(0,1]$
we have $|\vect{u}_{\tau,\mathsf{k}}(t)|_{\infty}\leq|\vect{u}_{0,0}(t)|_{\infty}$.
\item [ (b) ] For $\mathsf{k}>1$ we have $|\vect{u}_{\tau,\mathsf{k}}(t)|_{\infty}\geq\ee^{(k-1)\alpha\lambda_2\tau}|\vect{u}_{0,0}(t)|_{\infty}$.
\end{itemize}}
\end{thm}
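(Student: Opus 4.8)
The plan is to exploit that, along the zero-input trajectory, the network control input equals the state velocity: $\vect{u}_{\tau,\mathsf{k}}(t)=\dvect{x}(t)$, and similarly $\vect{u}_{0,0}(t)=\dvect{x}(t)$ for the trajectory of~\eqref{eq::consensus-orig}. Since $\vect{L}\vect{1}_N=\vect{0}$, the disagreement $\vect{q}(t):=\vect{L}\vect{x}(t)$ obeys the \emph{same} delay equation as $\vect{x}$, namely $\dvect{q}(t)=-\alpha(1-\mathsf{k})\vect{L}\vect{q}(t)-\alpha\mathsf{k}\,\vect{L}\vect{q}(t-\tau)$ with $\vect{q}(0)=\vect{L}\vectsf{r}(0)$ and $\vect{q}(\eta)=\vect{0}$ on $[-\tau,0)$, and $\vect{u}_{\tau,\mathsf{k}}(t)=-\alpha(1-\mathsf{k})\vect{q}(t)-\alpha\mathsf{k}\,\vect{q}(t-\tau)$. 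As a benchmark, for~\eqref{eq::consensus-orig} one has $\vect{u}_{0,0}(t)=-\alpha\,\ee^{-\alpha\vect{L}t}\vect{L}\vectsf{r}(0)$; because $\ee^{-\alpha\vect{L}t}$ is doubly stochastic, $t\mapsto\|\vect{u}_{0,0}(t)\|_\infty$ is nonincreasing, so $|\vect{u}_{0,0}|_\infty=\alpha\|\vect{L}\vectsf{r}(0)\|_\infty=\alpha\|\vect{q}(0)\|_\infty$, attained at $t=0$.

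For part~(a), with $\mathsf{k}\in(0,1]$ the weights $\alpha(1-\mathsf{k})$ and $\alpha\mathsf{k}$ are nonnegative and sum to $\alpha$, so $\|\vect{u}_{\tau,\mathsf{k}}(t)\|_\infty\le\alpha\max\{\|\vect{q}(t)\|_\infty,\|\vect{q}(t-\tau)\|_\infty\}$; since $\vect{q}\equiv\vect{0}$ on $[-\tau,0)$, it then suffices to prove the no-amplification estimate $\|\vect{q}(t)\|_\infty\le\|\vect{q}(0)\|_\infty$ for all $t\ge0$, which would give $|\vect{u}_{\tau,\mathsf{k}}|_\infty\le\alpha\|\vect{q}(0)\|_\infty=|\vect{u}_{0,0}|_\infty$. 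On $[0,\tau)$ the estimate is immediate from $\vect{q}(t)=\ee^{-\alpha(1-\mathsf{k})\vect{L}t}\vect{q}(0)$ and double stochasticity of $\ee^{-\alpha(1-\mathsf{k})\vect{L}t}$ (valid since $1-\mathsf{k}\ge0$). For $t\ge\tau$ I would work in the decoupled coordinates~\eqref{eq::laclacian_equivalent}: for $\mathsf{k}\le\tfrac12$ the current weight dominates the delayed one, so a Nagumo-type argument confines each $z_i$ to $[-|z_i(0)|,|z_i(0)|]$; for $\mathsf{k}\in(\tfrac12,1]$ one instead propagates the bound window-by-window on $[n\tau,(n+1)\tau)$ using the closed-form Lambert-$W$ solution of Section~\ref{sec::Prob_formu} together with the admissibility $\tau\le\bar{\tau}$, which keeps the dominant characteristic root in the half-plane that precludes transient overshoot. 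Reassembling the modes returns the desired bound on $\|\vect{q}(t)\|_\infty$.

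For part~(b), $\mathsf{k}>1$ flips the sign of the current gain, so on $[0,\tau)$ (where the delayed term is inactive) $\vect{q}(t)=\ee^{\alpha(\mathsf{k}-1)\vect{L}t}\vect{q}(0)$ \emph{grows}; because $\vect{q}(0)=\vect{L}\vectsf{r}(0)\in\operatorname{span}\{v_2,\dots,v_N\}$, the symmetric positive-definite matrix $\ee^{\alpha(\mathsf{k}-1)\vect{L}t}$ acts on that subspace with every eigenvalue at least $\ee^{\alpha(\mathsf{k}-1)\lambda_2t}$, so $\|\vect{q}(\tau)\|$ is magnified by at least $\ee^{\alpha(\mathsf{k}-1)\lambda_2\tau}$ relative to $\|\vect{q}(0)\|$. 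Since $\vect{u}_{\tau,\mathsf{k}}(t)=-\alpha(1-\mathsf{k})\vect{q}(t)-\alpha\mathsf{k}\,\vect{q}(t-\tau)$, evaluating the control effort on the next window $[\tau,2\tau)$—where $\vect{q}(t-\tau)$ is close to $\vect{q}(0)$ while $\vect{q}(t)$ is already inflated by the factor above—produces a value of $\|\vect{u}_{\tau,\mathsf{k}}(t)\|_\infty$ no smaller than $\ee^{(\mathsf{k}-1)\alpha\lambda_2\tau}\alpha\|\vect{q}(0)\|_\infty=\ee^{(\mathsf{k}-1)\alpha\lambda_2\tau}|\vect{u}_{0,0}|_\infty$, hence the claim.

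The step I expect to be the real obstacle is the no-amplification estimate $\|\vect{q}(t)\|_\infty\le\|\vect{q}(0)\|_\infty$ for $t\ge\tau$ in part~(a): the box $\{\|\vect{q}\|_\infty\le R\}$ is not invariant for the delay equation under a naive comparison argument, because the delayed term can momentarily push a coordinate outward, so one must genuinely use the decoupled scalar structure together with the closed-form solution on each delay window and the admissibility bound to exclude overshoot; in part~(b) the only delicate point is upgrading the $2$-norm growth of $\ee^{\alpha(\mathsf{k}-1)\vect{L}t}$ on $\operatorname{span}\{v_2,\dots,v_N\}$ to an $\infty$-norm statement for the particular vector $\vect{q}(0)$.
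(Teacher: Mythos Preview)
Your overall reduction for part~(a) is the same as the paper's: use the convex combination (valid for $\mathsf{k}\in(0,1]$) to reduce to a ``no-amplification'' estimate on the disagreement trajectory, then pass to the decoupled scalar modes~\eqref{eq::laclacian_equivalent}. The divergence is in how the no-amplification estimate is actually proved for $\mathsf{k}\in(\tfrac12,1]$, and this is where your proposal has a genuine gap.

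Your plan for that case is ``window-by-window via the Lambert-$W$ solution, using admissibility $\tau\le\bar{\tau}$, which keeps the dominant root in a half-plane that precludes transient overshoot.'' That last clause is not correct as stated: having the rightmost characteristic root in the open left half-plane guarantees eventual decay but does \emph{not} preclude transient overshoot of $|z_i(t)|$ past $|z_i(0)|$; with $\mathsf{k}>\tfrac12$ the delayed coefficient dominates the current one, the Nagumo argument fails exactly as you note, and the solution can oscillate before it settles. So something more than location of the dominant root is needed. The paper closes this gap by invoking Lemma~\ref{lem::max_x(t)} (a cited result for scalar delay equations with both coefficients negative), which says $\sup_{t\ge -\tau}|z_i(t)|=\max_{s\in[-\tau,2\tau]}|z_i(s)|$. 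This collapses the problem to the first two delay windows, and then the paper simply writes down the closed-form solution on $[0,\tau)$ and $[\tau,2\tau]$ (step-by-step integration, no Lambert~$W$ needed) and checks $|z_i(t)|\le|z_i(0)|$ there. Without that lemma or an equivalent device, your inductive window-by-window scheme has to control overshoot on \emph{every} interval $[n\tau,(n+1)\tau)$, and you have not indicated how.

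For part~(b) your idea is morally right (the current gain flips sign, so the state inflates on $[0,\tau)$ and the control effort picks this up on the next window), but the execution via $\ee^{\alpha(\mathsf{k}-1)\vect{L}t}$ gives a \emph{$2$-norm} amplification by $\ee^{\alpha(\mathsf{k}-1)\lambda_2\tau}$ on $\operatorname{span}\{v_2,\dots,v_N\}$, and, as you yourself flag, this does not automatically yield the same factor in the $\infty$-norm for the particular vector $\vect{q}(0)$. The paper sidesteps this entirely by staying in the decoupled scalar coordinates: it evaluates $|\vect{u}_{\tau,\mathsf{k}}|$ at $t=2\tau$, uses the explicit formulas on $[0,\tau)$ and $[\tau,2\tau]$ to get $z_i(\tau)=\ee^{\alpha(\mathsf{k}-1)\lambda_i\tau}z_i(0)$ and $|z_i(2\tau)|\le|z_i(\tau)|$, and combines these with the algebraic inequality $|\mathsf{k}a-(\mathsf{k}-1)b|\ge \mathsf{k}|a|-(\mathsf{k}-1)|b|$ to obtain the $\ee^{(\mathsf{k}-1)\alpha\lambda_2\tau}$ factor directly, without any norm conversion.
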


\begin{proof}
Consider the zero-input dynamics of~\eqref{eq::LaclacianEquivalent}, the equivalent representation of algorithm~\eqref{eq::consensus}. For the maximum control effort of algorithm~\eqref{eq::consensus} we have 
\begin{align}\label{eq::control_bound}
|\vect{u}_{\tau,k}(t)|_{\infty}=|-\alpha\,&(1-\mathsf{k})\,\vect{\Lambda}\,\vect{z}(t)-\alpha\,\mathsf{k}\,\vect{\Lambda}\,\vect{z}(t-\tau)|_{\infty}\nonumber\\=\alpha\max\{|(1-&\mathsf{k})\,\lambda_i\,z_i(t)+\,\mathsf{k}\,\lambda_i\,z_i(t-\tau)|_\infty\}_{i=2}^N.
\end{align}
Here we used the fact that $z_1(t)=0$.
Also, recalling~\eqref{eq::laclacian_equivalent}, for $\tau=0$ and any $i\in\{2,\cdots,N\}$ we have  $z_i(t)=\ee^{-\lambda_it}z_i(0)$, which gives $|\vect{u}_{0,0}(t)|_\infty=|\vect{u}_{0,0}(0)|_\infty=\alpha\max\{|\lambda_iz_i(0)|\}_{i=2}^N$. 

Next, we show that for  any $\tau\in(0,\bar{\tau})$ and $\mathsf{k}\in(0,1]$ $|\vect{u}_{\tau,\mathsf{k}}(t)|_\infty\leq\alpha\{|\lambda_iz_i(0)|\}_{i=2}^N$.  Notice that from~\eqref{eq::control_bound} we have  $|\vect{u}_{\tau,\mathsf{k}}(t)|_\infty\leq\alpha(1-\mathsf{k})\max\{|\lambda_iz_i(t)|_\infty\}+\alpha\mathsf{k}\max\{|\lambda_iz(t-\tau)|_\infty\}$. Also, recall that for $t\in[0,\tau)$ we have $z_i(t-\tau)=0$. Thus, to validate the statement~(a) it suffices to show that $|z_i(t)|_\infty=|z_i(0)|$. 
To this aim, consider  the trajectories $t\rightarrow z_{2:N}$ of~\eqref{eq::laclacian_equivalent}. Since set of dynamics~\eqref{eq::laclacian_equivalent} are exponentially stable with $-\alpha(1-\mathsf{k})\lambda_i\leq0$ and $-\alpha\mathsf{k}\lambda_i\leq0$, recalling Lemma~\ref{lem::max_x(t)} for any delay in the admissible range we have
$|z_i(t)|_\infty=\max_{s\in[-\tau,2\tau]}{|z_i(s)|}$ for any  $i\in\{2,\cdots,N\}$. Also, note that from~\eqref{eq::laclacian_equivalent}  we get
\begin{subequations}\label{eq::solu}
    \begin{align}
        z_i(t)&=0,\quad\quad\quad\quad\quad\quad\quad  ~\,t\in[-\tau,0),\label{eq::solu_a}\\
        z_i(t)&=\ee^{-\alpha(1-\mathsf{k})\lambda_i t}z_i(0),\quad t\in[0,\tau),\label{eq::solu_b}\\
        z_i(t)&= \ee^{-\alpha(1-\mathsf{k})\lambda_it}\!z_i(0)(1+\frac{\mathsf{k}}{(1-\mathsf{k})}(\ee^{-\alpha(1-\mathsf{k})\lambda_i(t-\tau)}\!-1))\!\!\!\!\!\nonumber\\&\quad t\in[\tau,2\tau]\label{eq::solu_c},
    \end{align}
\end{subequations}
which results in $\max_{s\in[-\tau,2\tau]}{|z_i(s)|}=|z_i(0)|$, and consequently $|z_i(t)|_\infty=|z_i(0)|$, which concludes statement~(a).
To validate part~(b) we proceed as follows. Recalling~\eqref{eq::control_bound} for $\mathsf{k}>1$ we have $|\vect{u}_{\tau,\mathsf{k}}(2\tau)|_\infty=\alpha\max\{|\mathsf{k}\lambda_iz_i(\tau)-(\mathsf{k}-1)\lambda_iz_i(2\tau)|\}_{i=2}^N\geq\alpha\max\{\mathsf{k}\lambda_i|z_i(\tau)|-(\mathsf{k}-1)\lambda_i|z_i(2\tau)|\}_{i=2}^N$. Also, from~\eqref{eq::solu_c} for $t\in[\tau,2\tau)$ we have $|z_i(2\tau)|\leq|z_i(\tau)|$, which gives $|\vect{u}_{\tau,\mathsf{k}}(2\tau)|_\infty=\alpha\max\{\mathsf{k}\lambda_i|z_i(\tau)|-(\mathsf{k}-1)\lambda_i|z_i(\tau)|\}_{i=2}^N=\alpha(2\mathsf{k}-1)\max\{\lambda_i|z_i(\tau)|\}_{i=2}^N$. Moreover, ~\eqref{eq::solu_b} implies that $z_i(\tau)=\ee^{\alpha(\mathsf{k}-1)\lambda_i \tau}z_i(0)$, which deduces $|\vect{u}_{\tau,\mathsf{k}}(2\tau)|\geq\alpha(2\mathsf{k}-1)\ee^{\alpha(\mathsf{k}-1)\lambda_i \tau}\max\{|\lambda_iz_i(0)|\}_{i=2}^N=(2\mathsf{k}-1)\ee^{\alpha(\mathsf{k}-1)\lambda_2 \tau}|\vect{u}_{0,0}|_\infty$.  Knowing $2\mathsf{k}-1\geq1$ and $|\vect{u}_{\tau,\mathsf{k}}(t)|_\infty\geq|\vect{u}_{\tau,\mathsf{k}}(2\tau)|_\infty$ we can conclude the proof.
\end{proof}

We close this section by a remark on how the split factor can be chosen based on the  expectations on the convergence rate, robustness to delay and managing the control effort.

\begin{rem}[Selecting $\mathsf{k}$ in the algorithm~\eqref{eq::consensus}]{\rm 
 Lemma~\ref{lem::admis_delay_modified_con}, Theorem~\ref{thm::consensus_prob} and Theorem~\ref{thm::max_con_eff} give insights on how we can choose the slit factor $\mathsf{k}\in\real$ given expectations on the algorithm's  acceleration, robustness to delay and control effort. Theorem~\ref{thm::max_con_eff} certifies that for any $\mathsf{k}\in(0,1]$, the rate of convergence we observe for any $\tau\in[0,\bar{\tau}]$ is attained without imposing any extra control effort on the agents. Therefore, assuming that the acceleration is expected without increasing the control effort, the split factor should be selected to satisfy $\mathsf{k}\in(0,1]$.
 According to Lemma~\ref{lem::ultimate_bound} the maximum attainable rate of convergence is an increasing function of $\mathsf{k}\in(0,1]$. Moreover, as $\mathsf{k}\rightarrow1$ 
 the ultimate bound on the rate of convergence converges to $\ee\rho_0$, which recovers the same bound established in~\cite[Theorem~4.4]{HM-SSK:18b}. On the other hand, as expected, as $\mathsf{k}\rightarrow0$ 
 the ultimate bound on the rate of convergence converges to  $\rho_0$. 
 Finally, we observe from Lemma~\ref{lem::admis_delay_modified_con} that for $\mathsf{k}>0.5$ the admissible delay bound is finite, and thus the robustness of the algorithm to delay is not strong. In Contrary, the algorithm is robust with respect to any perturbation in delay for $\mathsf{k}\in(0,0.5]$
 , because the admissible delay range for such split factors is $\real_{\geq0}$. Taking these observations into account, there exists a trade-off between robustness to delay and achieving higher acceleration when comes to choosing the split factor; $\mathsf{k}=1$ gives the maximum rate of convergence with the corresponding optimum delay while $\mathsf{k}=0.5$ results in robustness as well as higher rate of convergence relative to the original system~\eqref{eq::consensus-orig}.} 
\end{rem}

\begin{figure}[t]
  \unitlength=0.5in \centering
     \subfloat{

    \centering
   \begin{tikzpicture}[auto,thick,scale=0.8, every node/.style={scale=0.8}]
     \node (1) at (0,0) [ draw, minimum size=7pt,color=blue, circle, very thick, label=above:{}] {{\scriptsize 1}};
      \node (2) at (-1.3,-0) [ draw, minimum size=7pt,color=blue, circle,very thick, label=above:{}] {{\scriptsize 2}};
       \node (5) at (0,1.3) [ draw, minimum size=7pt,color=blue, circle, very thick, label=below:{}] {{\scriptsize 5}};
    \node (3) at (0,-1.3) [ draw, minimum size=7pt,color=blue, circle, very thick, label=above:{}] {{\scriptsize 3}};
    \node (4) at (1.3,0) [ draw, minimum size=7pt,color=blue, circle, very thick, label=below:{}] {{\scriptsize 4}};
    \node (L) at (5.5,0) [  minimum size=7pt,color=black] {{\small $\vect{L}\!=\!\begin{bmatrix}4&-1&-1&-1&-1\\-1&3&-1&0&-1\\-1&-1&3&-1&0\\-1&0&-1&3&-1\\-1&-1&0&-1&3\end{bmatrix}$}};
    \draw [<->] (1) to [out=-180,in=0] node[ fill=none,above]  {\small $1$} (2);
    \draw [<->] (1) to [out=0,in=180] node[ fill=none,above]  {\small $1$} (4);
      \draw [<->] (1) to [out=-90,in=90] node[ fill=none,right]  {\small $1$} (3);
   \draw [<->] (2) to [out=-90,in=180] node[ fill=none,below]  {\small $1$} (3);
    \draw [<->] (3) to  [out=0,in=-90] node[ fill=none,below]  {\small $1$} (4);
     \draw [<->] (5) to  node[ fill=none]  {\small $1$} (1);
      \draw [<->] (5) to [out=-180,in=90] node[ fill=none,above]  {\small $1$} (2);
        \draw [<->] (4) to [out=90,in=0] node[ fill=none,right]  {\small $1$} (5);
\end{tikzpicture}}
        \caption{{\small A connected graph of $5$ nodes.}}
    \label{fig:network}
\end{figure}
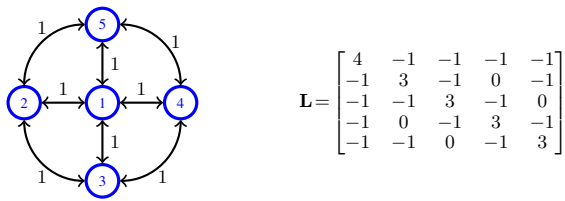

\vspace{-0.2in}
\section{Numerical Example}\label{sec::Num_ex}
We consider the modified average consensus algorithm~\eqref{eq::consensus} over the graph depicted in Fig.~\ref{fig:network}. The reference input of each agent $i\until{5}$ is chosen according to the 
first numerical example in~\cite{SSK-BVS-JC-RAF-KML-SM} to be the zero-order hold sampled points from the signal
 $\mathsf{r}^i(t)=a^i(2+\sin(\omega(t) t+\phi(t))+b^i$.
The idea discussed in~\cite{SSK-BVS-JC-RAF-KML-SM} is that the sensor agents sample the signal and should obtain the average of these sampled points before the next sampling arrives.  The parameters $a^i$ (the multiplicative sampling error) and $b^i$ (additive bias), $i\until{5}$, are chosen as the $i$th element of $[1.1,1,0.9,1.05,0.96]$ and $[-0.55,1,0.6,-0.9,-0.6]$, respectively. At each sampling time $\omega$ and $\phi$ are chosen randomly according to $N(0,0.25)$ and $N(0,(\pi/2)^2)$, where $N(\mu,\sigma)$ indicates the Gaussian distribution with mean $\mu$ and variance $\sigma$. We set the sampling rate at $2$Hz. This numerical example can be viewed as a simple abstraction for decentralized operations such as distributed sensor fusion where a dynamic or static average consensus algorithm is used to create the additive fusion terms in a distributed manner, e.g.,~\cite{ROS-JSS:05,ATK-JAF-AKR:13}. Since the  convergence of the average consensus algorithm is asymptotic, there is always an error when the algorithm is terminated in the finite inter-sampling time. Faster convergence is desired to reduce the residual error. 

For this example, in what follows, we study the response of the modified average consensus algorithm~\eqref{eq::consensus} for  
$\mathsf{k}\in\{-0.5,0,0.5,1,1.5\}$.  We note that the case of $\mathsf{k}=0$ gives the original (delay free) dynamic average consensus algorithm~\eqref{eq::consensus-orig} and thus is the baseline case that the rest of the cases should be compared to. For $\mathsf{k}\in\{-0.5,0,0.5,1,1.5\}$, the critical delay value $\bar{\tau}$ of the admissible delay range $(0,\bar{\tau})$ of~\eqref{eq::consensus}, respectively, is $\{\infty,\infty,\infty,0.32,0.18\}$~seconds. 
Figure~\ref{fig:res_log_ex2} illustrates how $\rho_\tau$ changes with $\tau$.  First, we note that for $\mathsf{k}=-0.5$ the rate of convergence decreases with delay. However, for positive values of $\mathsf{k}$ there is a range $(0,\tilde{\tau})$ for which $\rho_\tau>\rho_0$. For positive values of $\mathsf{k}$ we also observe monotonic increase until reaching $\tau^\star$ and then the monotonic decrease afterwards. The trend observed is in accordance with the results of Theorem~\ref{thm::consensus_prob}.
We also can observe that as the $\mathsf{k}$ increases the maximum achievable rate of convergence increases also. 
Figure~\ref{fig:dynamic_consen1} shows the tracking response of agent~$2$ for $\mathsf{k}\in\{-0.5,0,0.5,1,1.5\}$ when the delay is $\tau=0.1$ (similar trend is observed for the other agents). As seen, the convergence rate of~\eqref{eq::consensus} is different for each value of $\mathsf{k}$. The fastest response is observed for $\mathsf{k}=1.5$ while $\mathsf{k}=-0.5$ shows the lowest one. The decrease of rate of convergence for $\mathsf{k}=-0.5$ and its increase for the positive values of $\mathsf{k}$ is in accordance with the trend certified by Theorem~\ref{thm::consensus_prob} (note that as seen in Figure~\ref{fig:res_log_ex2}, $\tau=0.1$ is in the rate increasing delay range of $(0,\tilde{\tau}$ of the cases corresponding to $\mathsf{k}\in\{0.5,1,1.5\}$). The desired effect of fast convergence shows itself in the smaller tracking error that is observed at the end of each sampling time, e.g., the tracking error in the first epoch for  $\mathsf{k}\in\{0.5,1,1.5\}$ is,  respectfully, \%13,~\%9, and \%0.5 that is an improvement over \%15 that corresponds to $\mathsf{k=0}$ (case of original algorithm). We note here that as can be seen in Fig.~\ref{fig:res_log_ex2}, $\tau=0.1$ is close to $\tau^\star$ of the case corresponding to $\mathsf{k}=1.5$. The same level of fast convergence can be achieved for the cases of $\mathsf{k}=1$ and $\mathsf{k}=0.5$ if one uses $\tau^\star$ corresponding to these split factors.

Figure~\ref{fig:con_eff} shows the maximum control effort of zero-input dynamics of the algorithm~\eqref{eq::consensus} over time corresponding to $\tau=0.1$ and different values of $\mathsf{k}\in\{0,0.5,1,1.5\}$. For $\mathsf{k}=1.5$ the maximum control effort exceeds the value for the original consensus algorithm (case of $\mathsf{k=0}$). But, for $\mathsf{k}=1$ and $\mathsf{k}=0.5$ the maximum control effort is equal or less than the case $\mathsf{k}=0$. The trend observed above is in accordance with Theorem~\ref{thm::max_con_eff}.

\begin{figure}[t]
    \centering
        \includegraphics[trim=0pt 0 0 0,clip,scale=0.75]{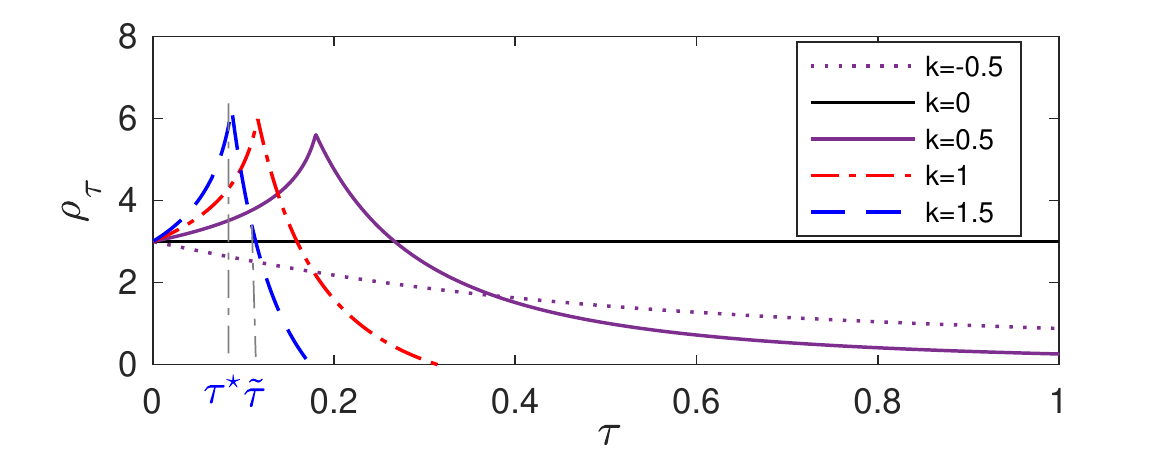}
   \vspace{-0.2in}
   \caption{{\small The rate of convergence $\rho_\tau$ of the modified average consensus algorithm~\eqref{eq::consensus}  over the graph in Fig.~\ref{fig:network} for different values of feedback gain $\mathsf{k}\in\{-0.5,0,0.5,1,1.5\}$. 
 For the example case of $\mathsf{k=1.5}$, note that $\tilde{\tau}=0.14$ and the maximum rate of convergence that can be achieved is $\rho_\tau^\star\approx 2\rho_0$ at $\tau^\star=0.11$.}
 }
    \label{fig:res_log_ex2}
\end{figure}
\begin{figure}[t]
    \centering
     \includegraphics[trim=5pt 0 0 0,clip,scale=0.7]{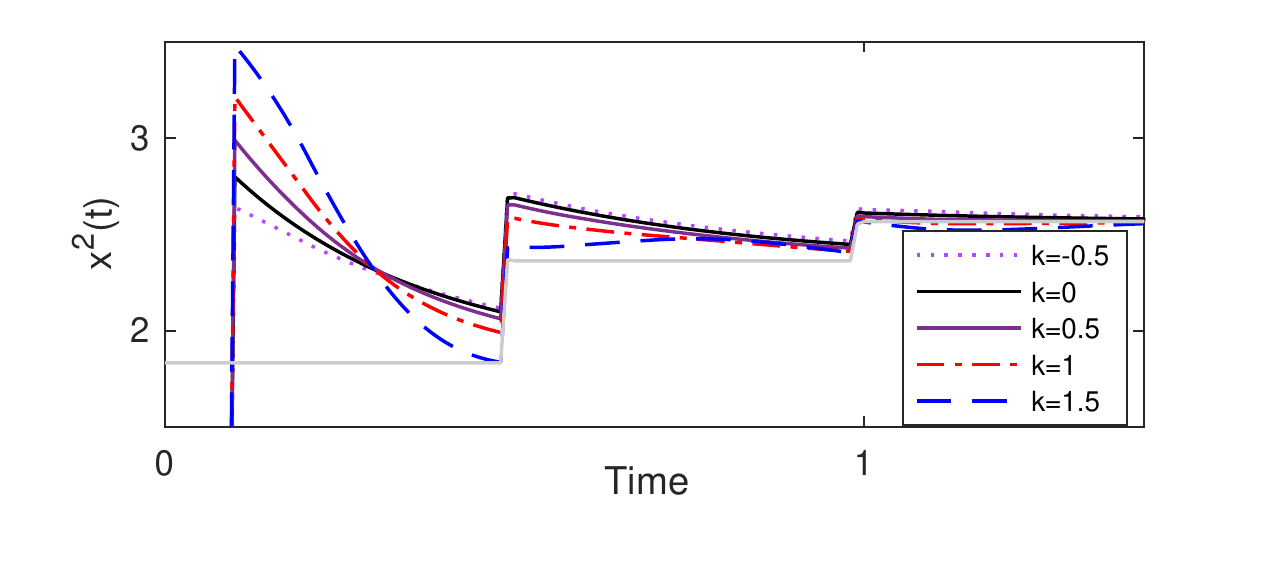}
    \vspace{-0.2in}
   \caption{{\small The trajectory of local state of agent~$2$ executed by the algorithm~\eqref{eq::consensus}  over the graph in Fig.~\ref{fig:network} for $\tau=0.1$ and  different different values of feedback gain $\mathsf{k}\in\{-0.5,0,0.5,1,1.5\}$.}
 }
    \label{fig:dynamic_consen1}
\end{figure}

\begin{figure}[t]
    \centering
      \includegraphics[trim=0pt 0 0 0,clip,scale=0.75]{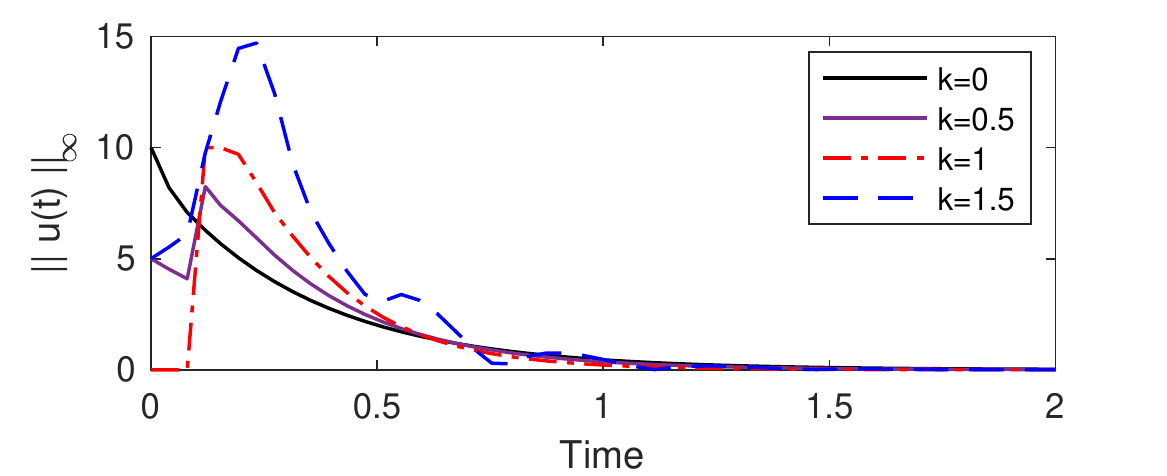}
    \vspace{-0.2in}
   \caption{{\small The maximum control effort executed by the algorithm~\eqref{eq::consensus}  over the graph in Fig.~  \ref{fig:network} for $\tau=0.1$ and different values of feedback gain $\mathsf{k}\in\{0,0.5,1,1.5\}$.}
 }
    \label{fig:con_eff}
\end{figure}



\section{Conclusion}\label{sec::conclu}
We  analyzed  the effect of using an affine combination of immediate and outdated disagreement feedbacks in increasing the rate of convergence of a dynamic average consensus algorithm. The modified algorithm  has the same ultimate tracking accuracy but with the right choices of the delay and the affine combination factor, can have faster convergence. Our study produced a set of closed-form expressions to specify the admissible delay range, the delay range for which the system experiences increase in its rate of convergence and a range that the optimum time delay corresponding to the maximum rate of convergence lies.
We also examined the range of affine combination factor for which the outdated feedback can be used to improve the convergence of the algorithm without increasing the control effort. To develop our results we used the Lambert W function to obtain the rate of convergence of our algorithm under study in the presence of the delay. Our future work includes extending our results for dynamic consensus algorithms over directed graphs and also investigating the use of outdated feedback in increasing the rate of convergence of other distributed algorithms for networked systems such as leader-follower algorithms.


\bibliographystyle{ieeetr}%
\bibliography{bib/alias,bib/Reference}

\begin{appendices}
\section{Delay gain function}\label{sec::Prelim}

The  lemma below highlights some of the  properties of the delay gain function $g(\gamma,\mathsf{x})$. Figure~\ref{Fig::g(k,x)} gives some graphical representation for the properties discussed in this lemma.

\begin{lem}[Properties of $g(\gamma,\mathsf{x})$] \label{thm::g(k,x)_prop}
{\rm The following assertions hold for the delay gain function~\eqref{eq::gain_rate} with $\gamma,\mathsf{x}\in\real$ :
\begin{itemize}
    \item[(a)] For any $\gamma\in\real$ we have $\lim_{\mathsf{x}\rightarrow0}g(\gamma,\mathsf{x})=1$.
    \item[(b)] For any $\gamma>1$ and $\mathsf{x}\in\real_{>0}$ we have $g(\gamma,\mathsf{x})<\gamma$.
    \item[(c)] For any $\gamma>1$ and $\mathsf{x}\in\real_{>0}$, $g(\gamma,\mathsf{x})$ is a strictly increasing function of $\mathsf{x}$.
    \item[(d)] Let  $\mathsf{x}\in(\bar{\mathsf{x}},0)$, where $\bar{\mathsf{x}}=\arccos(\gamma)/\sqrt{1-\gamma^2}$. Then, for any $\gamma<1$ (respectively $\gamma>1$) we have $g(\gamma,\mathsf{x})>\gamma$ (respectively $g(\gamma,\mathsf{x})<\gamma$).
    \item[(e)] For any $\gamma<1$ and $\mathsf{x}\in\real_{<0}$, $g(\gamma,\mathsf{x})$ is a strictly decreasing function of $\mathsf{x}$ for any $\mathsf{x}\in[\mathsf{x}^\star,0)\subset(\bar{\mathsf{x}},0)$, and a strictly increasing function of $\mathsf{x}$ for any $\mathsf{x}<\mathsf{x}^\star$, where $\mathsf{x}^\star=\frac{1}{\gamma}W_0(-\frac{\gamma}{\ee})$ when $\gamma\neq0$ and $\mathsf{x}^\star=-\frac{1}{\ee}$
    when $\gamma=0$. 
    \item[(f)] For any $\gamma<1$ and $\mathsf{x}\in\real_{<0}$, the maximum value of $g(\gamma,\mathsf{x})$ occurs at $\mathsf{x}^{\star}=\frac{1}{\gamma}W_0(-\frac{\gamma}{\ee})$ 
    where $g(\gamma,\mathsf{x}^{\star})=\frac{-\gamma}{W_0(-\frac{\gamma}{\ee})}$ when $\gamma\neq0$, and at $\mathsf{x}^\star=-\frac{1}{\ee}$ where $g(\gamma,\mathsf{x}^{\star})=\ee$ when $\gamma=0$.
    \item[(g)] For any $\gamma<1$ and $\mathsf{x}\in\real_{<0}$, $g(\gamma,\mathsf{x})>1$ if and only if $\mathsf{x}\in(\tilde{\mathsf{x}},0)$ where $\tilde{\mathsf{x}}$ is the unique solution of $g(\gamma,\mathsf{x})=1$ in $(\bar{\mathsf{x}},0)$.
\end{itemize}}
\end{lem}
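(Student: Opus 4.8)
The plan is to handle the seven items in three groups of increasing difficulty, writing throughout $u=u(\gamma,\mathsf{x})=\mathsf{x}\,\ee^{\gamma\mathsf{x}}$, so that $g(\gamma,\mathsf{x})=\re{W_0(u)}/\mathsf{x}$. \emph{Items (a)--(c) handle the regime $\mathsf{x}>0$.} Item (a) is immediate from \eqref{eq::limWz-z}: as $\mathsf{x}\to0$ we have $u\to0$ with $u\in(-1/\ee,\infty)$, so $W_0(u)$ is real by \eqref{eq::W_0_facts_a} and $g(\gamma,\mathsf{x})=(W_0(u)/u)\,\ee^{\gamma\mathsf{x}}\to1$. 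For (b), when $\mathsf{x}>0$ and $\gamma>1$ the argument $u$ is positive, $W_0(u)$ is real and increasing in $u$ by Lemma~\ref{lem::W0-increase-xpos}, and since $u=\mathsf{x}\,\ee^{\gamma\mathsf{x}}<\gamma\mathsf{x}\,\ee^{\gamma\mathsf{x}}$ while $W_0(\gamma\mathsf{x}\,\ee^{\gamma\mathsf{x}})=\gamma\mathsf{x}$, we get $W_0(u)<\gamma\mathsf{x}$, i.e. $g(\gamma,\mathsf{x})<\gamma$. For (c), differentiating with \eqref{eq::lambert_derivative} and the identity $\ee^{W_0(u)}=u/W_0(u)$ yields, wherever $W_0(u)$ is real,
\begin{align}\label{eq::plan_gp}
\frac{\partial g}{\partial\mathsf{x}}(\gamma,\mathsf{x})=\frac{W_0(u)\,\big(\gamma\mathsf{x}-W_0(u)\big)}{\mathsf{x}^{2}\,\big(1+W_0(u)\big)} ;
\end{align}
in the present case $W_0(u)>0$, $1+W_0(u)>0$, and $\gamma\mathsf{x}-W_0(u)=(\gamma-g)\,\mathsf{x}>0$ by (b), so $\partial g/\partial\mathsf{x}>0$.

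\emph{Item (d) is positivity on the admissible interval.} I would deduce it from Lemma~\ref{thm::stability_con}. For $\gamma<1$, pair $g$ with the scalar delay system $\dot y=\mathsf{a}\,y(t-\tau)+\mathsf{b}\,y(t)$ with $\mathsf{a}=-1$, $\mathsf{b}=\gamma$, so that $\mathsf{a}+\mathsf{b}<0$, $\mathsf{x}:=\mathsf{a}\tau=-\tau$, $\mathsf{a}\tau\,\ee^{-\mathsf{b}\tau}=u$, and -- the rightmost characteristic root lying on branch $0$ by Lemma~\ref{lem::LambertReal} -- the convergence rate is $\rho_\tau=-\tfrac{1}{\tau}\re{W_0(u)}-\mathsf{b}=-\mathsf{a}\big(g(\gamma,\mathsf{x})-\gamma\big)=g(\gamma,\mathsf{x})-\gamma$. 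By Lemma~\ref{thm::stability_con} this system is exponentially stable precisely for $\tau\in[0,\bar\tau)$ (for all $\tau\ge0$ if $\gamma\le-1$), i.e. $\rho_\tau>0$, which is exactly $g(\gamma,\mathsf{x})>\gamma$ on $\mathsf{x}\in(\bar{\mathsf{x}},0)$ where $\bar{\mathsf{x}}=\mathsf{a}\bar\tau$ is the image of the critical delay \eqref{eq::Admis_range_tau}; by continuity $g(\gamma,\mathsf{x})\to\gamma$ as $\mathsf{x}\to\bar{\mathsf{x}}^{+}$. The case $\gamma>1$ is the mirror image with $\mathsf{a}=+1$.

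\emph{Items (e)--(g) concern the regime $\mathsf{x}<0$ and form the main obstacle.} The difficulty is that as $\mathsf{x}$ decreases the argument $u=\mathsf{x}\,\ee^{\gamma\mathsf{x}}$ eventually drops below $-1/\ee$, where $W_0(u)$ is complex and \eqref{eq::plan_gp} no longer applies. The organizing observation is that the threshold $u=-1/\ee$ is reached exactly at $\mathsf{x}=\mathsf{x}^\star$: using $\gamma\mathsf{x}^\star=W_0(-\gamma/\ee)$ and $\ee^{W_0(z)}=z/W_0(z)$ one checks $\mathsf{x}^\star\,\ee^{\gamma\mathsf{x}^\star}=-1/\ee$, hence $W_0(u)=-1$ there and $g(\gamma,\mathsf{x}^\star)=-1/\mathsf{x}^\star=-\gamma/W_0(-\gamma/\ee)$ (with $\mathsf{x}^\star=-1/\ee$, $g=\ee$ when $\gamma=0$). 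On $(\mathsf{x}^\star,0)$, where $u\in(-1/\ee,0)$ and $W_0(u)\in(-1,0)$ is real, \eqref{eq::plan_gp} holds and has no interior zero -- solving $W_0(u)=\gamma\mathsf{x}$ and applying $t\mapsto t\,\ee^{t}$ forces $\gamma=1$ -- so $\partial g/\partial\mathsf{x}$ has one sign, fixed to negative because $g$ at $\mathsf{x}^\star$ exceeds $1=\lim_{\mathsf{x}\to0^{-}}g$; this gives the ``strictly decreasing on $[\mathsf{x}^\star,0)$'' part of (e), and with (a) the maximum statement (f). For $\mathsf{x}<\mathsf{x}^\star$, where $W_0(u)=\sigma+i\omega$ with $\omega\in(0,\pi)$ (apart from a further real-$W_0$ stretch at very negative $\mathsf{x}$ when $0<\gamma<1$), implicit differentiation of $W_0(u)\,\ee^{W_0(u)}=u$ gives
\begin{align}\label{eq::plan_gpc}
\frac{\partial g}{\partial\mathsf{x}}(\gamma,\mathsf{x})=\frac{\omega^{2}+(\gamma\mathsf{x}-\sigma)\,(\sigma+\sigma^{2}+\omega^{2})}{\mathsf{x}^{2}\,\big((1+\sigma)^{2}+\omega^{2}\big)} ,
\end{align}
and the numerator is shown positive: reality of $u$ forces $\sigma\sin\omega+\omega\cos\omega=0$, whence $\sigma+\sigma^{2}+\omega^{2}=\omega(\omega-\sin\omega\cos\omega)/\sin^{2}\omega>0$ on $(0,\pi)$, and a further estimate using $u=-\omega\,\ee^{\sigma}/\sin\omega$ and $\gamma\mathsf{x}-\sigma=\ln\!\big(\omega/(|\mathsf{x}|\sin\omega)\big)$ settles its sign; in the residual real-$W_0$ stretch one uses \eqref{eq::plan_gp} together with $\mathsf{x}<\bar{\mathsf{x}}$ there, so $g<\gamma$ and again $\partial g/\partial\mathsf{x}>0$. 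Assembling, $g$ rises strictly from $\gamma$ (its value/limit at $\bar{\mathsf{x}}$, or at $-\infty$ when $\bar{\mathsf{x}}=-\infty$) to the unique maximum $g(\gamma,\mathsf{x}^\star)=-\gamma/W_0(-\gamma/\ee)$ and then falls strictly to $1$; the inequality $g(\gamma,\mathsf{x}^\star)>1$ reduces to $W_0(-\gamma/\ee)>-1$, true by \eqref{eq::W_0_facts_c}. Item (g) is then immediate: strict monotonicity on each side of $\mathsf{x}^\star$ and the intermediate value theorem give a unique $\tilde{\mathsf{x}}\in(\bar{\mathsf{x}},\mathsf{x}^\star)\subset(\bar{\mathsf{x}},0)$ with $g(\gamma,\tilde{\mathsf{x}})=1$, $g<1$ to its left and $g>1$ throughout $(\tilde{\mathsf{x}},0)$.

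I expect the sign of the numerator of \eqref{eq::plan_gpc} -- i.e. monotonicity of $g$ in the complex-$W_0$ regime -- to be the crux. If the direct estimate proves awkward, the clean alternative is to prove instead that the rightmost-root abscissa of the auxiliary scalar system of (d) is monotone on the admissible delay range, using Lemma~\ref{lem::LambertReal} to discard the non-principal Lambert branches and the continuity stability property together with Lemma~\ref{thm::stability_con} for the boundary behaviour.
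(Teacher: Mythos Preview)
Your overall architecture matches the paper's: the same derivative formula \eqref{eq::plan_gp} in the real-$W_0$ regime, the same split at $\mathsf{x}^\star$ (where $u=-1/\ee$) for item (e), and the same complex-regime expression \eqref{eq::plan_gpc} with the constraint $\sigma=-\omega\cot\omega$. Items (a)--(c), (f), (g) are handled essentially identically.

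Two places differ. For (d), the paper argues directly by continuity: $g\to1$ as $\mathsf{x}\to0^-$, so $g>\gamma$ (resp.\ $<\gamma$) near zero, and $\bar{\mathsf{x}}$ is by definition the first point where $g=\gamma$; continuity of $g$ then forces $g>\gamma$ (resp.\ $<\gamma$) on the whole interval $(\bar{\mathsf{x}},0)$. Your route through Lemma~\ref{thm::stability_con} is equivalent and perfectly valid, just less elementary.

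The substantive point is item (e) in the complex regime. You correctly isolate the numerator $\omega^2+(\gamma\mathsf{x}-\sigma)(\sigma+\sigma^2+\omega^2)$ and prove $\sigma+\sigma^2+\omega^2>0$, but then flag the sign of $\gamma\mathsf{x}-\sigma$ as the ``crux'' and leave it to a further estimate or a stability detour. In fact you already have it for free: since $\sigma=\re{W_0(u)}=\mathsf{x}\,g(\gamma,\mathsf{x})$, one has $\gamma\mathsf{x}-\sigma=\mathsf{x}\big(\gamma-g(\gamma,\mathsf{x})\big)$, which is positive on $(\bar{\mathsf{x}},0)$ by your own part (d) (because $\mathsf{x}<0$ and $g>\gamma$). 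This is exactly the paper's argument, and it settles the numerator immediately. The same observation also streamlines your real-regime argument on $(\mathsf{x}^\star,0)$: rather than the ``no interior zero plus endpoint comparison'' detour, just read off $\gamma\mathsf{x}-W_0(u)>0$ from (d) and conclude $\partial g/\partial\mathsf{x}<0$ directly from \eqref{eq::plan_gp}. With this one-line use of (d), your proof of (e) becomes complete and coincides with the paper's; no separate estimate involving $\ln(\omega/(|\mathsf{x}|\sin\omega))$ is needed.
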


\begin{figure}[t!]\label{Fig::g(k,x)}
  \unitlength=0.5in
   \captionsetup{skip=13pt}\centering
  \subfloat[$\gamma>1$]{
    \includegraphics[trim={5pt 2pt 10pt 0},clip,scale=0.75]{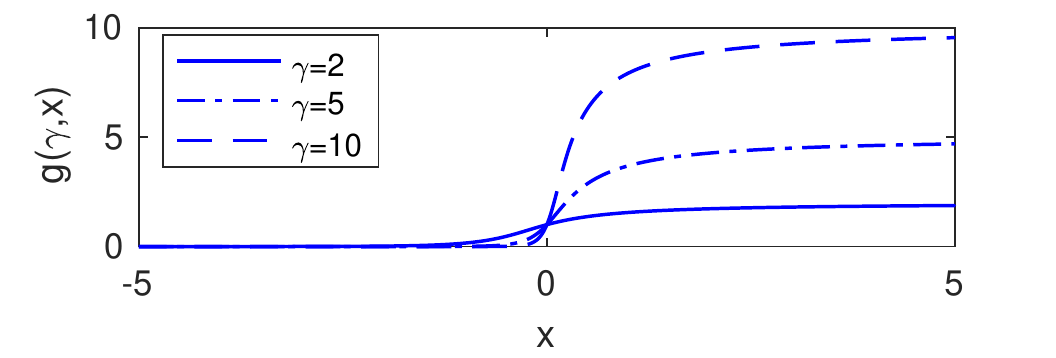} }
 \\\vspace{-0.15in}
 \subfloat[$0<\gamma<1$]{
    \includegraphics[trim={5pt 2pt 10pt 0},clip,scale=0.75]{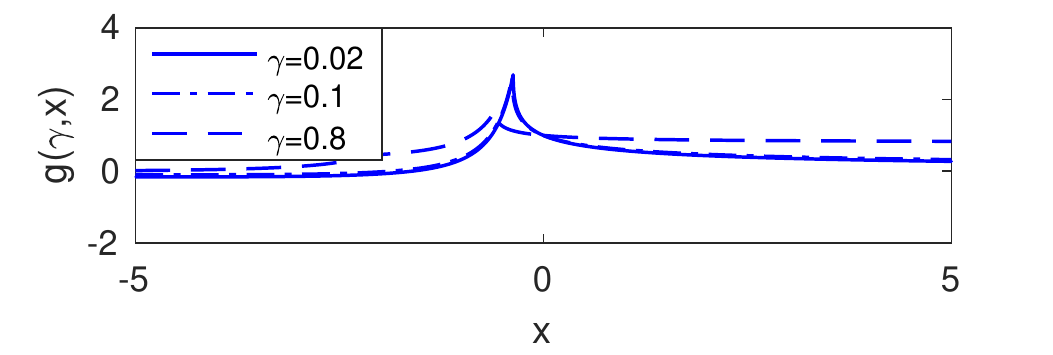} 
  }\\ \vspace{-0.15in}
  \subfloat[$-1<\gamma<0$]{
    \includegraphics[trim={2pt 0pt 10pt 0},clip,scale=0.75]{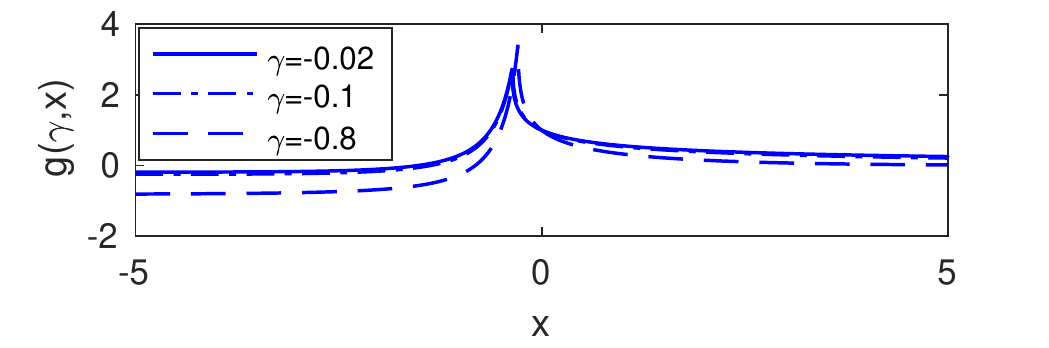} 
  }\\\vspace{-0.10in}
     \subfloat[$\gamma<-1$]{
    \includegraphics[trim={2pt 0pt 10pt 0},clip,scale=0.75]{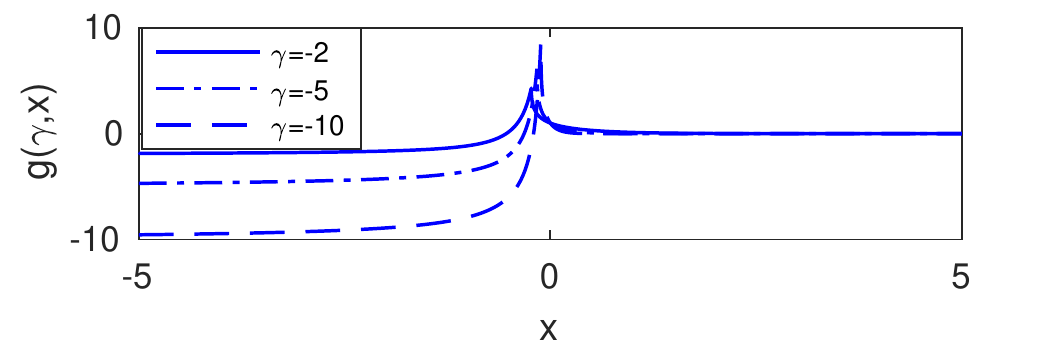} 
  }\vspace{-0.10in}
  \caption{
 {\small The delay gain function for different values of $\mathsf{x},\gamma$.} 
  }\label{Fig::g(k,x)}
\end{figure}
The proof of this lemma invokes various properties of the Lambert W function listed in Section~\ref{sec::prelim} and is given in Appendix~\ref{sec::APP_B}. The next theorem, whose proof relies on the results of Lemma~\ref{thm::g(k,x)_prop}, and is also given in Appendix~\ref{sec::APP_B}, characterizes the effect of  delay on the rate of convergence of scalar time-delayed system~\eqref{DDE_scalar}.
The tightest estimate of the rate of convergence of~\eqref{DDE_scalar} is characterized by the magnitude of the real part of the rightmost root of its characteristic equation $s=\frac{1}{\tau}W_0(\alpha\,\tau\,\ee^{-\tau\mathsf{b}})+\mathsf{b}$ 
(recall Lemma~  \ref{lem::LambertReal} and \eqref{eq::W_0_facts_a}).
That is (see~\cite[Corollary~1]{SD-JN-AGU:11})
\begin{align}\label{eq::rho}
\rho_\tau=\,\,-\frac{1}{\tau}\re{W_0(\mathsf{a}\,\tau\,\ee^{-\tau\mathsf{b}})}-\mathsf{b}.
\end{align}

Recalling~\eqref{eq::gain_rate},
we write~\eqref{eq::rho} as
\begin{align}\label{rate_g(x)}
\rho_\tau=-(g(\gamma,\mathsf{x})\,\mathsf{a}+\mathsf{b}) = -(g(\gamma,\mathsf{x})-\gamma)\,\mathsf{a}
\end{align}
where $\mathsf{x}\!=\!\mathsf{a}\tau$ and $\gamma\!=\!-\frac{\mathsf{b}}{\mathsf{a}}$. It follows from~\eqref{eq::limWz-z}~that 
\begin{align}
\lim_{\tau\to0}g(\gamma,\mathsf{a}\tau)=1.
\end{align}
Therefore, as expected, $\lim_{\tau\to0}\rho_\tau=\rho_0$, where
\begin{align}
    \rho_0=-(\mathsf{a}+\mathsf{b})=-(1-\gamma)\mathsf{a}.
\end{align}

system~\eqref{DDE_scalar} in terms of different values of $\mathsf{a},\mathsf{b}\in\real$, $\mathsf{a}\neq0$ satisfying $\mathsf{a}+\mathsf{b}<0$.
\begin{thm}[Effect of delay on the rate of convergence of delayed system~\eqref{DDE_scalar}]\label{thm::delay_effect_scalar}
{\rm Consider system~\eqref{DDE_scalar} with $\mathsf{a}\in\real\backslash\{0\}$ and $\mathsf{b}\in\real$ such that $\mathsf{a}+\mathsf{b}<0$, whose rate of convergence $\rho_\tau$ is specified by~\eqref{rate_g(x)}. Consider also the delay gain function~\eqref{eq::gain_rate} with $\gamma=-\frac{\mathsf{b}}{\mathsf{a}}$ and $\mathsf{x}=\mathsf{a}\tau$. Then,

\begin{itemize}
    \item[(a)] for $\mathsf{a}>0$ and $\mathsf{b}<0$ the system~\eqref{DDE_scalar} is exponentially stable for any $\tau\in\real_{\geq0}$. Moreover, the rate of convergence decreases by increasing $\tau\in\real_{\geq0}$.
       \item[(b)] for $\mathsf{a}<0$ and $\mathsf{b}\in\real$, $\rho_\tau>\rho_0$ if and only if $\tau\in[0,\tilde{\tau})\subset[0,\bar{\tau})$ where $\tilde{\tau}$ is the unique solution of $g(\gamma,\mathsf{a}\tau)=1$ in $(0,\bar{\tau})$  and $\bar{\tau}$ is specified by
     \begin{align}\label{eq::admissible_range_tau}
  \bar{\tau}=\text{arccos}(-\mathsf{b}/\mathsf{a})/\sqrt{\mathsf{a}^2-\mathsf{b}^2}.
    \end{align}
       Moreover, $\rho_{\tau}$ is monotonically increasing (resp. decreasing) with $\tau$ for any $\tau\in[0,\tau^\star)\subset[0,\bar{\tau})$ (resp. $\tau\in(\tau^\star,\bar{\tau})\subset[0,\bar{\tau})$), where $\tau^\star=-\frac{1}{\mathsf{b}}W_0(\frac{\mathsf{b}}{\mathsf{a}\,\ee})$ when $\mathsf{b}\neq0$ and $\tau^\star=-\frac{1}{\mathsf{a}\,\ee}$ when $\mathsf{b}=0$. Finally, the maximum rate of convergence of
       $\rho_\tau^{\star}=-(1+\frac{1}{W_0(\frac{\mathsf{b}}{\mathsf{a}\ee})})\mathsf{b}$ when $\mathsf{b}\neq0$ and $\rho_\tau^{\star}=-\mathsf{a}\ee$ when $\mathsf{b}=0$ is obtained at $\tau=\tau^\star$.
\end{itemize}}
\end{thm}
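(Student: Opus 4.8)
The plan is to route the entire argument through the identity \eqref{rate_g(x)}, which writes $\rho_\tau=-(g(\gamma,\mathsf{x})-\gamma)\,\mathsf{a}=(\gamma-g(\gamma,\mathsf{x}))\,\mathsf{a}$ with $\mathsf{x}=\mathsf{a}\tau$ and $\gamma=-\mathsf{b}/\mathsf{a}$, together with $\rho_0=-(\mathsf{a}+\mathsf{b})=(\gamma-1)\mathsf{a}$. Monotonicity and extremal behaviour of $\tau\mapsto\rho_\tau$ then become transcriptions of the corresponding facts about $\mathsf{x}\mapsto g(\gamma,\mathsf{x})$ from Lemma~\ref{thm::g(k,x)_prop}, and the admissible range comes from Lemma~\ref{thm::stability_con}. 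The one thing to watch is orientation: $\tau\mapsto\mathsf{x}=\mathsf{a}\tau$ is increasing when $\mathsf{a}>0$ and decreasing when $\mathsf{a}<0$, and $g\mapsto(\gamma-g)\mathsf{a}$ reverses orientation exactly when $\mathsf{a}>0$; so when $\mathsf{a}<0$ the two reversals cancel and $\rho_\tau$ inherits the monotonicity of $g$ in $\mathsf{x}$, while when $\mathsf{a}>0$ it inherits the reverse. Also $\mathsf{a}+\mathsf{b}<0$ is equivalent to $\gamma>1$ if $\mathsf{a}>0$ and to $\gamma<1$ if $\mathsf{a}<0$, matching the case split in Lemma~\ref{thm::g(k,x)_prop}.

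\emph{Part (a).} With $\mathsf{a}>0,\ \mathsf{b}<0$ we have $\mathsf{b}<-\mathsf{a}=-|\mathsf{a}|$, so Lemma~\ref{thm::stability_con}(a) gives exponential stability for every $\tau\in\real_{\ge0}$ ($\bar\tau=\infty$), and $\gamma>1$. As $\tau$ runs over $\real_{\ge0}$, $\mathsf{x}=\mathsf{a}\tau$ runs over $\real_{\ge0}$, where Lemma~\ref{thm::g(k,x)_prop}(c) says $g(\gamma,\cdot)$ is strictly increasing; since $\mathsf{a}>0$, $\rho_\tau=(\gamma-g(\gamma,\mathsf{a}\tau))\mathsf{a}$ is strictly decreasing in $\tau$. (For completeness $\rho_\tau>0$ throughout by Lemma~\ref{thm::g(k,x)_prop}(b), and $\rho_\tau\to\rho_0$ as $\tau\to0$ by Lemma~\ref{thm::g(k,x)_prop}(a)/\eqref{eq::limWz-z}.)

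\emph{Part (b).} Now $\mathsf{a}<0$, hence $\gamma<1$, and $\mathsf{x}=\mathsf{a}\tau\le0$; the map $\tau\mapsto\mathsf{x}$ carries $[0,\bar\tau)$ decreasingly and bijectively onto the interval $(\bar{\mathsf{x}},0]$ on which parts (d)--(g) of Lemma~\ref{thm::g(k,x)_prop} operate, where in the generic subcase $|\mathsf{b}|<|\mathsf{a}|$ we have $\mathsf{a}<-|\mathsf{b}|$, $-1<\gamma<1$, Lemma~\ref{thm::stability_con}(b) gives the finite $\bar\tau$ of \eqref{eq::admissible_range_tau}, and $\mathsf{a}\bar\tau=\bar{\mathsf{x}}$; in the complementary subcase $|\mathsf{b}|\ge|\mathsf{a}|$ — which, given $\mathsf{a}+\mathsf{b}<0$, forces $\mathsf{b}\le-|\mathsf{a}|$ — we have $\gamma\le-1$, $\bar\tau=\infty$ by Lemma~\ref{thm::stability_con}(a) and $\bar{\mathsf{x}}=-\infty$, and the rest is verbatim. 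First, by Lemma~\ref{thm::g(k,x)_prop}(g), $g(\gamma,\mathsf{x})>1$ iff $\mathsf{x}\in(\tilde{\mathsf{x}},0)$ with $\tilde{\mathsf{x}}$ the unique root of $g(\gamma,\cdot)=1$ in $(\bar{\mathsf{x}},0)$; since $\rho_\tau-\rho_0=(1-g(\gamma,\mathsf{a}\tau))\mathsf{a}$ and $\mathsf{a}<0$, this says $\rho_\tau>\rho_0\iff g>1\iff\tau\in[0,\tilde\tau)$ with $\tilde\tau:=\tilde{\mathsf{x}}/\mathsf{a}\in(0,\bar\tau)$, the unique solution of $g(\gamma,\mathsf{a}\tau)=1$ in $(0,\bar\tau)$. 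Second, by Lemma~\ref{thm::g(k,x)_prop}(e), $g(\gamma,\cdot)$ is strictly decreasing on $[\mathsf{x}^\star,0)$ and strictly increasing for $\mathsf{x}<\mathsf{x}^\star$, where $\mathsf{x}^\star=\tfrac1\gamma W_0(-\tfrac\gamma\ee)$ (or $-1/\ee$ when $\gamma=0$) and $\mathsf{x}^\star\in(\bar{\mathsf{x}},0)$; composing with the orientation-reversing $\tau\mapsto\mathsf{x}$ and then with $g\mapsto(\gamma-g)\mathsf{a}$ (also orientation-reversing, $\mathsf{a}<0$) shows $\rho_\tau$ is strictly increasing on $[0,\tau^\star)$ and strictly decreasing on $(\tau^\star,\bar\tau)$, where $\tau^\star:=\mathsf{x}^\star/\mathsf{a}$, which on substituting $\gamma=-\mathsf{b}/\mathsf{a}$ (so $\gamma\mathsf{a}=-\mathsf{b}$, $-\gamma/\ee=\mathsf{b}/(\mathsf{a}\ee)$) becomes $\tau^\star=-\tfrac1{\mathsf{b}}W_0(\tfrac{\mathsf{b}}{\mathsf{a}\ee})$ (resp.\ $-\tfrac1{\mathsf{a}\ee}$ when $\mathsf{b}=0$), with $\tau^\star<\bar\tau$. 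Third, $\rho^\star_\tau=\rho_{\tau^\star}=(\gamma-g(\gamma,\mathsf{x}^\star))\mathsf{a}$, and Lemma~\ref{thm::g(k,x)_prop}(f) gives $g(\gamma,\mathsf{x}^\star)=-\gamma/W_0(-\gamma/\ee)$ (resp.\ $\ee$ when $\gamma=0$); plugging in and using the same substitutions yields $\rho^\star_\tau=-(1+1/W_0(\tfrac{\mathsf{b}}{\mathsf{a}\ee}))\mathsf{b}$ (resp.\ $-\mathsf{a}\ee$ when $\mathsf{b}=0$).

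The genuinely delicate step is the orientation/sign bookkeeping in part~(b): under $\mathsf{x}=\mathsf{a}\tau$ with $\mathsf{a}<0$ every inequality flips, $\tau$ and $\mathsf{x}$ traverse their intervals in opposite directions, and one must verify that $\mathsf{a}\bar\tau$ really is the left endpoint $\bar{\mathsf{x}}$ used in Lemma~\ref{thm::g(k,x)_prop} (and that the degenerate subcase $\mathsf{b}\le-|\mathsf{a}|$, where both $\bar\tau$ and $|\bar{\mathsf{x}}|$ are $+\infty$, is covered uniformly with the generic one). Once these identifications are made, parts (a)--(g) of Lemma~\ref{thm::g(k,x)_prop} and Lemma~\ref{thm::stability_con} are quoted directly, and the only remaining work is the elementary algebra converting the $(\gamma,\mathsf{x})$ formulas for $\tau^\star$, $\tilde\tau$ and $\rho^\star_\tau$ into the $(\mathsf{a},\mathsf{b})$ formulas in the statement.
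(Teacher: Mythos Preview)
Your proof is correct and follows essentially the same route as the paper: both reduce everything to the identity $\rho_\tau=-(g(\gamma,\mathsf{a}\tau)-\gamma)\mathsf{a}$ and then read off stability from Lemma~\ref{thm::stability_con} and the monotonicity/extremal behaviour from parts (b)--(g) of Lemma~\ref{thm::g(k,x)_prop}. Your orientation bookkeeping and your explicit handling of the subcase $|\mathsf{b}|\ge|\mathsf{a}|$ (where $\bar\tau=\infty$ and the formula~\eqref{eq::admissible_range_tau} does not literally apply) are in fact a bit more careful than the paper's own write-up, but the argument is the same.
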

In developing our results we also invoke the following result. 
\begin{lem}(maximum value of  the trajectory of~\eqref{DDE_scalar}
~\cite[Theorem~2.10]{AI-EL-ST:02})\label{lem::max_x(t)}
{\rm For the time delay system~\eqref{DDE_scalar} and any $\tau\in(0,\bar{\tau}]$ with $\mathsf{a},\mathsf{b}\in\real_{<0}$ the following holds 
\begin{align}\label{eq::max_x}
  |x(t)|_{\infty}=\text{max}_{s\in[-\tau,2\tau]}|x(s)|.
\end{align}
}
\end{lem}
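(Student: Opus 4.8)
\emph{Proof strategy.} This is \cite[Theorem~2.10]{AI-EL-ST:02}; here is the route I would take. Put $M:=\max_{s\in[-\tau,2\tau]}|x(s)|$. If $M=0$ the initial data vanish on $[-\tau,0)$ up to the single value $x(0)$, and if $x(0)=0$ too then $x\equiv 0$; otherwise $M>0$, which I assume. The claim is equivalent to $|x(t)|\le M$ for all $t\ge 2\tau$, so I would argue by contradiction: let $t^\star:=\inf\{t>2\tau:|x(t)|>M\}$ and suppose it is finite. By continuity $|x(t^\star)|=M$ and $|x(s)|\le M$ for all $s\in[-\tau,t^\star]$, and, replacing $x$ by $-x$ if necessary, $x(t^\star)=M$. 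Since $x$ is $C^{1}$ past $t=\tau$ and exceeds $M$ just to the right of $t^\star$, we get $\dot x(t^\star)\ge 0$; substituting into \eqref{DDE_scalar}, $0\le\dot x(t^\star)=\mathsf{b}M+\mathsf{a}\,x(t^\star-\tau)$, and since $t^\star-\tau\in[-\tau,t^\star)$ has $|x(t^\star-\tau)|\le M$, this in particular forces $x(t^\star-\tau)\le -|\mathsf{b}|M/|\mathsf{a}|<0$.

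When $|\mathsf{b}|\ge|\mathsf{a}|$ (equivalently $\mathsf{b}\le-|\mathsf{a}|$, the case $\bar{\tau}=\infty$ of Lemma~\ref{thm::stability_con}(a)) I would actually establish the sharper statement that $j\mapsto N_j:=\max_{[(j-1)\tau,j\tau]}|x|$ is nonincreasing for $j\ge 1$: writing $x$ on the $j$-th window by variation of constants, its forcing involves only $x$ on the $(j-1)$-th window, so $|x(t)|\le\bigl(e^{-|\mathsf{b}|u}+\tfrac{|\mathsf{a}|}{|\mathsf{b}|}\bigl(1-e^{-|\mathsf{b}|u}\bigr)\bigr)N_{j-1}$ with $u=t-(j-1)\tau\in[0,\tau]$, and $|\mathsf{a}|\le|\mathsf{b}|$ makes the bracket $\le 1$. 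Then $\sup_{t\ge-\tau}|x(t)|=N_0=\max_{[-\tau,0]}|x|\le M$, which rules out $t^\star$ and is even sharper than the claim; this also explains why the statement only needs the wider window $[-\tau,2\tau]$, since in the harder regime below the supremum genuinely can lie past $t=0$.

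The substantive case is $|\mathsf{a}|>|\mathsf{b}|$, i.e.\ $\mathsf{a}<\mathsf{b}<0$, where the per-window coefficient above may exceed $1$ and one must invoke $\tau\le\bar{\tau}=\arccos(-\mathsf{b}/\mathsf{a})/\sqrt{\mathsf{a}^2-\mathsf{b}^2}$. The plan is to localise the last oscillation of $x$ before $t^\star$. Since $x(t^\star-\tau)<0<x(t^\star)$, $x$ has a last zero $t_z\in(t^\star-\tau,t^\star)$ with $x>0$ on $(t_z,t^\star]$; letting $t_m\ge t^\star$ be the maximiser of this positive hump (finite, since $x$ is bounded on $[0,\infty)$ by Lemma~\ref{thm::stability_con}), the stationarity $\dot x(t_m)=0$ gives $x(t_m-\tau)=-\mathsf{b}\,x(t_m)/\mathsf{a}<0$, whence $t_m-\tau<t_z$ and the hump has length $t_m-t_z<\tau\le\bar{\tau}$. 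Writing $x(t_m)=\int_{t_z}^{t_m}\bigl(\mathsf{b}\,x(s)+\mathsf{a}\,x(s-\tau)\bigr)\,\mathrm{d}s$, I would drop the nonpositive term $\mathsf{b}\,x(s)$ and estimate $\int_{t_z}^{t_m}\mathsf{a}\,x(s-\tau)\,\mathrm{d}s$ by comparing $x$ on the preceding negative excursion with the marginally stable trigonometric mode of \eqref{DDE_scalar} at $\tau=\bar{\tau}$, whose envelope oscillates at exactly the angular frequency $\sqrt{\mathsf{a}^2-\mathsf{b}^2}$; the point is that $\bar{\tau}=\arccos(-\mathsf{b}/\mathsf{a})/\sqrt{\mathsf{a}^2-\mathsf{b}^2}$ is precisely the phase lag at which the next hump of that mode just fails to overshoot its predecessor. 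The constraint $t_m-t_z<\tau\le\bar{\tau}$ is what closes the comparison and yields $x(t_m)\le M$, contradicting $x(t^\star)=M$ being exceeded.

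I expect this last step to be the main obstacle. The naive bound $x(t_m)\le|\mathsf{a}|\,\tau\,M$ is not good enough, since $|\mathsf{a}|\bar{\tau}$ can exceed $1$; one is therefore forced to keep track of the sign pattern and the lengths of consecutive excursions of $x$ and to compare them against the exact critically-oscillating solution, so that $\tau\le\bar{\tau}$ is used in its precise trigonometric form rather than as a crude size bound. This sign-and-phase bookkeeping is the technical heart of \cite[Theorem~2.10]{AI-EL-ST:02}.
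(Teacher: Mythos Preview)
The paper does not prove this lemma at all: it is stated with the citation \cite[Theorem~2.10]{AI-EL-ST:02} and then simply \emph{used} (in the proof of Theorem~\ref{thm::max_con_eff}) without further argument. So there is no ``paper's own proof'' to compare your proposal against.

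As for the proposal itself: your treatment of the delay-independent regime $|\mathsf{b}|\ge|\mathsf{a}|$ via the variation-of-constants bound on successive $\tau$-windows is clean and correct, and indeed yields the sharper conclusion $\sup_{t}|x(t)|=\max_{[-\tau,0]}|x|$. In the regime $|\mathsf{a}|>|\mathsf{b}|$ you correctly identify that the naive bound $|\mathsf{a}|\tau M$ fails and that one must exploit $\tau\le\bar{\tau}$ in its exact trigonometric form; however, the argument you sketch (localising the last zero $t_z$, the hump maximiser $t_m$, and comparing with the marginally oscillating mode) remains a plan rather than a proof---you yourself flag the final comparison as ``the main obstacle'' and do not carry it out. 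One concrete gap: your inequality $x(t^\star-\tau)\le -|\mathsf{b}|M/|\mathsf{a}|$ is compatible with $x(t^\star-\tau)=0$ when $\mathsf{b}=0$, so the strict negativity you assert there, and hence the existence of the zero $t_z\in(t^\star-\tau,t^\star)$, is not guaranteed in that boundary case. Since the paper relies on the cited reference rather than reproving the result, a reader wanting the full argument should consult \cite{AI-EL-ST:02} directly.
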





\section{Proofs of Lemma~\ref{thm::g(k,x)_prop} and Theorem~  \ref{thm::delay_effect_scalar}}\label{sec::APP_B}

\begin{proof}[Proof of Lemma~\ref{thm::g(k,x)_prop}]
Part~(a) can be readily deduced by invoking~\eqref{eq::lambert_derivative} since $W_0(\mathsf{x}\,\ee^{\gamma\,\mathsf{x}})\rightarrow \mathsf{x}\,\ee^{\gamma\,\mathsf{x}}$ as $\mathsf{x}\rightarrow0$. To prove statement (b) we proceed as follows. Let $q=\mathsf{x}\,\ee^{\gamma\,\mathsf{x}}$. 
Since $x\in\real_{>0}$, then $q\in\real_{>0}$. As a result, given the properties of Lambert W function reviewed in Section~  \ref{sec::prelim}, we can write $\mathsf{x}=\frac{1}{\gamma}W_0(\gamma q)$ and $\re{W_0(q)}=W_0(q)$,
which allows us to represent  $g(\gamma,\mathsf{x})$ as
\begin{align}\label{eq::k_prop_stat-b}
    g(\gamma,\mathsf{x})=\frac{W_0(q)}{W_0(\gamma\,q)}\,\gamma,\quad\quad  \text{for~~}x\in\real_{>0}.
\end{align}
Since for $\gamma>1$ we have $q<\gamma\,q$, by invoking  Lemma~  \ref{lem::W0-increase-xpos} we obtain $\frac{W_0(q)}{W_0(\gamma\,q)}<1$, which together with $W_0(q)\in\real_{>0}$ and $W_0(\gamma\,q)\in\real_{>0}$ validates statement (b) from~\eqref{eq::k_prop_stat-b}.


Next, we validate statement (c). The derivative of $g(\gamma,\mathsf{x})$ with respect to $\mathsf{x}\in\real$ is 
\begin{align}\label{eq::g_derivative}
    &\frac{\text{d}\,g(\gamma,\mathsf{x})}{\text{d}\,\mathsf{x}}=\frac{(1+\gamma\,\mathsf{x})\ee^{\gamma\,\mathsf{x}}}{\mathsf{x}}\re{\frac{1}{\mathsf{x}\,\ee^{\gamma\,\mathsf{x}}+\ee^{W_0(\mathsf{x}\,\ee^{\gamma\,\mathsf{x}})}}}-\\&\frac{1}{\mathsf{x}^2}\re{W_0(\mathsf{x}\,\ee^{\gamma\,\mathsf{x}})}=\frac{(1+\gamma\,\mathsf{x})}{\mathsf{x}^2}\re{\frac{W_0(\mathsf{x}\,\ee^{\gamma\,\mathsf{x}})}{W_0(\mathsf{x}\,\ee^{\gamma\,\mathsf{x}})+1}}-\nonumber\\&\frac{1}{\mathsf{x}^2}\re{W_0(\mathsf{x}\,\ee^{\gamma\,\mathsf{x}})}\!=\!\frac{1}{\mathsf{x}^2}\re{\frac{(\gamma\,\mathsf{x}\!-\!W_0(\mathsf{x}\ee^{\gamma\,\mathsf{x}}))W_0(\mathsf{x}\ee^{\gamma\,\mathsf{x}})}{(W_0(\mathsf{x}\,\ee^{\gamma\,\mathsf{x}})+1)}},\nonumber
\end{align}
for $\mathsf{x}\,\ee^{\gamma\,\mathsf{x}}\neq-\frac{1}{\ee}$.
Recall~\eqref{eq::W_0_facts_c} that  $\re{W_0(\mathsf{z})}+1>0$ for any $\mathsf{z}\in\real\backslash\{-\frac{1}{\ee}\}$ and $\re{W_0(\mathsf{z})}=W_0(\mathsf{z})>0$ for any $\mathsf{z}\in\real_{>0}$. Note also that we have already shown that for any $\gamma>1$ and $\mathsf{x}>0$ we have  $g(\gamma,\mathsf{x})<\gamma$ which gives $\gamma\,\mathsf{x}-W_0(\mathsf{x}\,\ee^{\gamma\,\mathsf{x}})>0$. Therefore, for $\gamma>1$ and $\mathsf{x}\in\real_{>0}$ from~\eqref{eq::g_derivative} we obtain $\frac{\text{d}\,g(\gamma,\mathsf{x})}{\text{d}\,\mathsf{x}}>0$, which validates statement (c).

To validate statement~(d), consider $x\in(\bar{x},0]$. For $x\rightarrow0^-$ we have $g(\gamma,x)\rightarrow1$. So, for $\gamma<1$ (respectively $\gamma>1$) we get $g(\gamma,x)>\gamma$ (respectively $g(\gamma,x)<\gamma$) as $x\rightarrow0^-$. Moreover, we know that the admissible bound, $x=\bar{x}$ is the first point that $g(\gamma,x)=\gamma$ holds. So, since $g(\gamma,x)$ is a continuous function, for any $x\in(\bar{x},0]$ we have $g(\gamma,x)>\gamma$ for $\gamma<1$, and $g(\gamma,x)<\gamma$ for $\gamma>1$.

For proof of statement~(e) we proceed as follows. 
Recall the properties of Lambert $W_0$ function in~\eqref{eq::W_0_facts}. Note that for $0<\gamma<1$, we have $-1<W_0(-\frac{\gamma}{\ee})<0$ and for $\gamma<0$, we have $W_0(-\frac{\gamma}{\ee})>0$. Also recall that $W_0(0)=0$. Therefore, for $\gamma<1$ and $\gamma\neq0$, we have $\frac{1}{\gamma}W_0(-\frac{\gamma}{\ee})\in\real_{<0}$. Now for $\gamma<1$  consider $\mathsf{x}\in[\frac{1}{\gamma}W_0(-\frac{\gamma}{\ee}),0)$ for $\gamma\neq0$ and $\mathsf{x}\in[-\frac{1}{\ee},0)$ for $\gamma=0$. For such $\mathsf{x}$, we have $\mathsf{x}\,\ee^{\gamma\,\mathsf{x}}\in\real_{<0}$. For $f(\mathsf{x})=\mathsf{x}\,\ee^{\gamma\mathsf{x}},$ with $\mathsf{x},\gamma\in\real$ we know $\frac{d\,f}{d\,\mathsf{x}}=(1+\gamma\,\mathsf{x})\ee^{\gamma\mathsf{x}}>0$ for any $\mathsf{x}\in(-\frac{1}{\ee},0]$ and $\gamma<1$, i.e., $f(\mathsf{x})$ is a strictly increasing continuous function.
Because the solutions of $\mathsf{z}\ee^{\gamma\mathsf{z}}=-\frac{1}{\ee}$ are $\mathsf{z}=\frac{1}{\gamma}W_{l}(-\frac{\gamma}{\ee})$, $l=\{-1,0\}$ for $\gamma\neq0$ and $\mathsf{z}=-\frac{1}{\ee}$ for $\gamma=0$, for $\mathsf{x}\in[\frac{1}{\gamma}W_0(-\frac{\gamma}{\ee}),0)$ we have $\mathsf{x}\ee^{\gamma\mathsf{x}}\in(-\frac{1}{\ee},0]$ and then $W_0(\mathsf{x}\ee^{\gamma\,\mathsf{x}})\in\real_{<0}$ (recall \eqref{eq::W_0_facts_a}). Next, note that by statement~(d) we have $\gamma\,\mathsf{x}-W_0(\mathsf{x}\,\ee^{\gamma\,\mathsf{x}})=\mathsf{x}(\gamma-g(\mathsf{x},\gamma))>0$ for $\mathsf{x}\in(\bar{\mathsf{x}},0]$. 
Therefore $\frac{d\,g(\mathsf{x},\gamma)}{d\,\mathsf{x}}<0$ can be inferred from~\eqref{eq::g_derivative}. 
Next, for $\mathsf{x}<\frac{1}{\gamma}W_0(-\frac{\gamma}{\ee})$, let $W_0(\mathsf{x}\,\ee^{\gamma\,\mathsf{x}})=w+\mathsf{i}\,u$. 
Then,~\eqref{eq::g_derivative} can be written as $\frac{\text{d}\,g(\gamma,\mathsf{x})}{\text{d}\,\mathsf{x}}=\frac{1}{\mathsf{x}^2}\re{\frac{(\gamma\,\mathsf{x}-(w+\mathsf{i}\,u))(w+\mathsf{i}\,u)}{((w+\mathsf{i}\,u)+1)}}=\frac{1}{\mathsf{x}^2((w+1)^2+u^2)}((\gamma\mathsf{x}-w)(w^2+u^2+w)+u^2).$
In addition, we have $w=-u\cot{u}$ since $\im{\mathsf{x}\,\ee^{\gamma\,\mathsf{x}}}=0$, which gives
 $\frac{\text{d}\,g(\gamma,\mathsf{x})}{\text{d}\,\mathsf{x}}=\frac{1}{\mathsf{x}^2u^2((\cot{u}+1)^2+1)}((\gamma\mathsf{x}+u\cot{u})
    (u^2\cot^2{u}+u^2-u\cot{u})+u^2)>0$.
Here, we used $u^2\cot^2{u}+u^2-u\cot{u}=\frac{u}{\sin{u}}(\frac{u}{\sin{u}}-\cos{u})>0$, and  $\gamma\mathsf{x}+u\cot{u}=\gamma\mathsf{x}-w=\gamma\mathsf{x}-\re{W_0(\mathsf{x}\,\ee^{\gamma\,\mathsf{x}})}=\mathsf{x}(\gamma-g(\gamma,x))>0$, which holds for any $\mathsf{x}\in[\bar{\mathsf{x}},0)$(recall statement~(d)), which finalize our proof for statement~(e).

For proof of statement~(f), notice that statement~(e) explicitly implies that $\max(g(\gamma,\mathsf{x}))=g(\gamma,\mathsf{x}^{\star})$ for any $x\in\real_{<0}$ where $\mathsf{x}^{\star}\ee^{\gamma\,\mathsf{x}^{\star}}=-\frac{1}{\ee}$, which is equivalent to $\mathsf{x}^{\star}=\frac{1}{\gamma}W_0(-\frac{\gamma}{\ee})$ for $\gamma\neq0$, and $\mathsf{x}^{\star}=-\frac{1}{\ee}$ for $\gamma=0$.

Proof of statement~(g) is as follows. In statement~(a) we showed that $g(\gamma,\mathsf{x})\rightarrow1$ as $\mathsf{x}\rightarrow0^-$. Moreover, $g(\gamma,\mathsf{x})$ is a continuous ascending function in
$\mathsf{x}\in(-\infty,\frac{1}{\gamma}W_0(-\frac{\gamma}{\ee})]$, and descending function in $\mathsf{x}\in[\frac{1}{\gamma}W_0(-\frac{\gamma}{\ee}),0)$. So, continuity implies that there exists a $\tilde{\mathsf{x}}\in(\bar{\mathsf{x}},\mathsf{x}^{\star})$ such that $g(\gamma,\tilde{\mathsf{x}})=1$, or equivalently $\re{W_0(\tilde{\mathsf{x}}\ee^{\gamma\,\tilde{\mathsf{x}}})}=\tilde{\mathsf{x}}$, and also $g(\gamma,\mathsf{x})>1$ holds for any $\mathsf{x}\in(\tilde{\mathsf{x}},0)$.\boxend 
\end{proof}

\begin{proof}[Proof of Theorem~  \ref{thm::delay_effect_scalar}]
Because by assumption we have $\alpha+\mathsf{b}<0$, $\mathsf{a}>0$ implies that $\mathsf{b}<-\mathsf{a}<0$, resulting in $\gamma>1$ and $\mathsf{x}=\mathsf{a}\tau>0$ for $\tau\in\real_{>0}$. Therefore, invoking Lemma~  \ref{thm::g(k,x)_prop} statement~(b) we get $g(\gamma,\mathsf{x})<\gamma$. Thus,~\eqref{rate_g(x)} implies that system~\eqref{DDE_scalar} is exponentially stable regardless of value of $\tau\in\real_{\geq0}$.
Moreover, by taking derivative of $\rho_\tau$ with respect to $\tau$, we obtain
\begin{align}\label{eq::rho_der}
    \frac{\text{d}\,\rho_\tau}{\text{d}\,\tau}=(\frac{\text{d}\,g(\gamma,\mathsf{x})}{\text{d}\,\mathsf{x}})(\frac{\text{d}\,\mathsf{x}}{\text{d}\,\tau})=-\mathsf{a}\,\frac{\text{d}\,g(\gamma,\mathsf{x})}{\text{d}\,\mathsf{x}}.
\end{align}
Lemma~\ref{thm::g(k,x)_prop} part~(c) states that $\frac{\text{d}\,g(\gamma,\mathsf{x})}{\text{d}\,\mathsf{x}}>0$ for any $\gamma>1$ and $\mathsf{x}>0$. Hence, for $\mathsf{a}>0$ we have $\frac{\text{d}\,\rho_\tau}{\text{d}\,\tau}<0$ which concludes our proof of part~(a).

For $\mathsf{a}<0$ and $\mathsf{b}\in\real$, from~\eqref{rate_g(x)} it follows that $\rho_\tau>\rho_0$ if and only if $g(\gamma,\mathsf{a}\tau)>1$. In this case, because of $\mathsf{a}+\mathsf{b}<0$, we have $\gamma<1$ and  $\mathsf{x}=\mathsf{a}\tau<0$ for $\tau\in\real_{>0}$. Therefore, by virtue of statement (g) of Lemma~  \ref{thm::g(k,x)_prop} we have $\rho_\tau>\rho_0$ if and only if $\tau\in[0,\tilde{\tau})\subset[0,\bar{\tau})$ where $\tilde{\tau}$ is the unique solution of $g(\gamma,\mathsf{a}\tau)=1$ in $(0,\bar{\tau})$. 
Additionally, by virtue of part (e) of Lemma~  \ref{thm::g(k,x)_prop}, $\rho_{\tau}$, whose rate of change with respect to $\tau$ is specified by~\eqref{eq::rho_der}, is monotonically increasing (resp. decreasing) with $\tau$ for any $\tau\in[0,\tau^{\star})\subset[0,\bar{\tau})$ (resp. $\tau\in(\tau^{\star},\bar{\tau})\subset[0,\bar{\tau})$) where $\tau^{\star}=\frac{\mathsf{x}^{\star}}{\mathsf{a}}=\frac{1}{\gamma\mathsf{a}}W_0(-\frac{\gamma}{\ee})=-\frac{1}{\mathsf{b}}W_0(\frac{\mathsf{b}}{\mathsf{a}\,\ee})$ for $\mathsf{b}\neq0$ and $\tau^{\star}=\frac{\mathsf{x}^{\star}}{\mathsf{a}}=-\frac{1}{\mathsf{a}\,\ee}$ for  $\mathsf{b}=0$.  Moreover, by virtue of part (f) of  Lemma~\ref{thm::g(k,x)_prop} we conclude that the maximum value of $g(\gamma,\mathsf{x})$ occurs at $\mathsf{x}^{\star}=\mathsf{a}\tau^\star$ where $g(\gamma,\mathsf{x}^{\star})=\frac{-\gamma}{W_0(-\frac{\gamma}{\ee})}$ for $\mathsf{b}\neq0$, which gives 
$\rho_\tau^{\star}=-(1+\frac{1}{W_0(\frac{\mathsf{b}}{\mathsf{a}\ee})})\mathsf{b}$. For $\mathsf{b}=0$, we have $\rho_\tau^{\star}=-\mathsf{a}\,\ee$.\boxend
\end{proof}

\end{appendices}

\end{document}